\newcommand{\rn}[1]{%
  \textup{\lowercase\expandafter{\romannumeral#1}}%
}
\newcommand{\REV}[1]{#1}
\newcommand{\REVFB}[1]{#1}
\begin{document}

\title{Robust offset-free nonlinear model predictive control for systems \\ learned by neural nonlinear autoregressive exogenous models}

\author[1]{Jing Xie}

\author[1]{Fabio Bonassi}

\author[1]{Marcello Farina}

\author[1]{Riccardo Scattolini}

\authormark{Xie \textsc{et al}}


\address{\orgdiv{Dipartimento di Elettronica, \\ Informazione e Bioingegneria}, \\ \orgname{Politecnico di Milano}, \orgaddress{\state{Milano}, \country{Italy}}}

\corres{Jing Xie, Dipartimento di Elettronica, Informazione e Bioingegneria, Politecnico di Milano, Via Ponzio 34/5, 20133, Milano, Italy. \\ \email{jing.xie@polimi.it}}

\fundingInfo{European Union’s Horizon 2020 research and innovation programme under the Marie Skłodowska-Curie grant agreement \\ No. 953348}

\abstract[Abstract]{This paper presents a robust Model Predictive Control (MPC) scheme that provides offset-free setpoint tracking for systems described by Neural Nonlinear AutoRegressive eXogenous (NNARX) models.
To this end, a NNARX model that learns the dynamics of the plant from input-output data is augmented with an explicit integral action on the output tracking error.
A robust tube-based MPC is finally designed, leveraging the unique structure of the model, to ensure \REV{robust offset-free convergence to constant reference signals even in case of plant-model mismatch}.
Numerical simulations on a water heating system show the effectiveness of the proposed control algorithm.}

\keywords{nonlinear model predictive control, offset-free tracking, neural networks, robust control, learning-based control}
\maketitle

 \DraftwatermarkOptions{%
 angle=0,
 hpos=0.5\paperwidth,
 vpos=0.95\paperheight,
 fontsize=0.012\paperwidth,
 color={[gray]{0.2}},
 text={
   \parbox{0.99\textwidth}{This is the peer reviewed version of \href{http://doi.org/10.1002/rnc.6883}{DOI: 10.1002/rnc.6883}, available in Open Access on the publisher's website. Please, cite Reference \citenum{xie2023robust} instead. \\
   This article may be used for non-commercial purposes in accordance with Wiley Terms and Conditions for Use of Self-Archived Versions. This article may not be enhanced, enriched or otherwise transformed into a derivative work, without express permission from Wiley or by statutory rights under applicable legislation. Copyright notices must not be removed, obscured or modified. The article must be linked to Wiley’s version of record on Wiley Online Library and any embedding, framing or otherwise making available the article or pages thereof by third parties from platforms, services and websites other than Wiley Online Library must be prohibited.}},
 }

\section{Introduction}\label{sec1}

Learning algorithms for model identification and control design are increasingly popular in the control community  \cite{zhong2013learningbased,schoukens2019nonlinear}.
Among the main approaches developed so far, we here recall Gaussian processes\cite{hewing2019cautious}, set membership identification algorithms\cite{terzi2019learning}, and Bayesian identification\cite{GREEN2015109}, where large and informative data-sets are utilized to extract important information on the characteristics of the system under control.
 In this framework, Recurrent Neural Networks (RNN) are also widely used, see  References \citenum{dynamicrnn1995delgado} and \citenum{nncontrol1995}, due to their ability to describe nonlinear dynamics. 
With the goal of learning safe RNN models, significant research efforts have been recently devoted to the analysis of the stability properties of several RNN structures such as Gated Recurrent Units (GRU)\cite{bonassi2020stability},
Short Term Memory networks (LSTM)\cite{terzi2021learning}, Echo State Networks (ESN)\cite{armenio2019model}, and Neural Nonlinear AutoRegressive eXogenous models (NNARX)\cite{bonassi2021nnarx}. 
In these works, sufficient conditions for Input-to-State Stability (ISS) and Incremental Input-to-State Stability ($\delta$ISS) of the adopted RNN model have been derived. 
These conditions have been exploited for training provenly stable RNN models, which, in turn, has been shown to pave the way to theoretically-sound control strategies, mainly relying on Model Predictive Control \cite{terzi2021learning, armenio2019model}, albeit other control strategies have also been proposed \cite{bonassi2022recurrent}. 
An overview of the advantages of RNN with stability properties is reported in Reference \citenum{bonassi2022survey}, together with an in-depth discussion on the many issues to be considered when using RNN for control design, such as the  verifiability, interpretability, and lifelong learning adaptation.

A further step towards the application of MPC with RNN models concerns the development of control schemes guaranteeing asymptotic zero error regulation for constant reference signals also in the presence of exogenous disturbances. 
To this end, several offset-free tracking MPC algorithms have been proposed both for linear and nonlinear plant models. 
In Reference \citenum{magni2001output}, the model is augmented with an integrator on the output tracking error and an observer is designed to reconstruct the state of the system. 
In this solution, offset-free is achieved without defining and estimating any disturbance. 
This approach is adopted also in Reference \citenum{bonassi2021nonlinear}, where an offset-free MPC control scheme is designed for systems learned by GRU networks. 
A second, very popular approach is proposed in Reference \citenum{morari2012nonlinear}, where the system is augmented with a (fictitious) disturbance model. 
The estimation of this disturbance is then used to compute the steady-state values of the system's states and inputs which guarantee zero error regulation also in case of plant-model mismatch.

Regardless of the problem of asymptotic zero-error regulation, model uncertainties and/or perturbations can cause the violation of state and input constraints.
In addition, nominal stability can be lost, which can pose huge risks for safety-critical systems.
In this context, several robust MPC algorithms have been proposed, such as min-max MPC\cite{minmax1998scokaert}, where the optimal control sequence is computed by minimizing the cost function in the worst case scenario.
However, this method generally results in a computationally intractable optimization problem. 
\REV{
An effective alternative solution is that of the popular tube-based approaches\cite{MAYNE2005robust,LANGSON2004tube} which, albeit being mainly developed for linear systems, have also been recently extended to the nonlinear ones.}


In this paper, we propose a solution to the \emph{robust zero-error regulation} problem for systems described by Neural NARX models.
\REV{Compared to other RNNs, the recurrence of these models only comes from the feedback of past input-output data instead of from hidden neurons, which results in a simpler structure and easier training procedures. 
The state of NNARXs consists only of past input-output data, so that a state observer is not needed in the implementation of the MPC algorithm. For those reasons, Neural NARX models have been widely used in academic research \cite{nn1996levin, nn1993levin} and industrial applications\cite{nnapplication2008himmelblau, MOHDALI2015}.} 
Along the lines of References \citenum{magni2001output} and \citenum{bonassi2021nonlinear}, we propose to augment the control structure with two elements: (\emph{i}) an integrator on the output tracking error, to achieve asymptotic offset-free tracking; (\emph{ii}) a derivative action on the MPC output, useful to apply almost standard stability results for nominal and robust "tube-based" MPC. Some preliminary results, obtained by considering only the nominal case, have been reported in Reference \citenum{offsetnarx2022fabio}.

The proposed approach has been tested on the model of a water heating benchmark system, with two main goals.
First, to assess the closed-loop performances of the nominal MPC law, and to compare them to those achieved by the strategy proposed in reference \citenum{morari2012nonlinear}: results suggest that the proposed approach achieves remarkable performances even in presence of disturbances.
Second, to assess the robustness of the tube-based MPC control strategy against exogenous disturbances.

The paper is structured as follows.
In Section \ref{sec:statement} the NNARX nominal model and the perturbed system are introduced.
In Section \ref{sec:control} the proposed robust control framework is described in detail.
Then, in Section \ref{sec:example} the proposed control architecture is tested on a simulated water heating benchmark system.
Lastly, some conclusions are drawn in Section \ref{sec:conclusions}.

\subsection{Notation}

The following notation is adopted in this paper. Let $R$ denote the field of real numbers, $R^n$ the n-dimensional Euclidean space. Given a vector $v$, $v^\prime$ is its transpose, $\| v \|_p$ its $p$-norm and $\|p\|_{\infty}$ its maximum norm.
In addition, given a matrix $Q$, we denote $\| v \|_Q^2 := v^\prime Q v$.
Sequences of vectors are indicated by bold-face fonts, i.e., $\boldsymbol{v}_k = \left\{ v_0, ..., v_k \right\}$. Given two sets $\mathcal{A}$ and $\mathcal{B}$, $\mathcal{A}\oplus\mathcal{B} := \left\{a+b| a\in\mathcal{A}, b\in \mathcal{B}\right\}$ denotes the Minkowski set addition and $\mathcal{A} \ominus \mathcal{B} := \left\{a| a\oplus \mathcal{B} \subseteq \mathcal{A}\right\}$ the Pontryagin set subtraction.

\section{Problem statement and NNARX Model} \label{sec:statement}
The robust tracking problem we want to solve can be described as follows. Consider a general dynamical system
\begin{eqnarray}\label{sist}
  y=\mathcal{S}(u),
\end{eqnarray}
with input $u \in \mathcal{U}\subset R^m$ and output $y \in \mathcal{Y}\subset R^p$, where $\mathcal{U}$ and $\mathcal{Y}$ are closed sets. For simplicity, in the following it is assumed that the number of outputs $p$ equals the number of inputs $m$, but the proposed approach can be easily extended to the case $m > p$.

Assume that the nominal model of the plant is described by
\begin{eqnarray}\label{mod}
  \bar{y}=\mathcal{\bar{S}}(u),
\end{eqnarray}
where the output $\bar{y} \in \mathcal{Y}\subset R^p$.\REV{We assume that the amplitude of the modeling error is quantified (as specified later).}
The aim of this work is to design for $\mathcal{S}$, based on  $\mathcal{\bar{S}}$, an MPC control law guaranteeing closed-loop stability and robust asymptotic zero error regulation for constant reference signals $y^o \in \mathcal{{Y}}\subset R^p$.

In this paper, the above-mentioned problem is tackled in three steps:
\begin{enumerate}[i.]
  \item definition of the nominal NNARX model $\mathcal{\bar{S}}$ and of the modeling error;
  \item design of a control architecture \REV{aiming to guarantee asymptotic zero-error output regulation in presence of plant-model mismatch};
  \item definition of a robust MPC algorithm guaranteeing constraints' satisfaction and stability results also in the perturbed case $\mathcal{\bar{S}}\neq \mathcal{S}$.
\end{enumerate}

\subsection{Neural NARX nominal model} \label{sec:nnarx}
The nominal model $\mathcal{\bar{S}}$ is assumed to be described by a NNARX architecture\cite{bonassi2021nnarx}, where the output $\bar{y}_{k+1}$ is computed as a regression over past $N$ input and output samples as well as current input $u_k$:
\begin{equation} \label{eq:rnn:nnarx_model}
\bar{y}_{k+1} = \eta(\bar{y}_{k}, \bar{y}_{k-1}, ..., \bar{y}_{k-N+1}, u_{k}, u_{k-1}, ... u_{k-N}; \Phi),
\end{equation}
where $\eta$ denotes the nonlinear regression function and $\Phi$ is the set of model parameters.

Letting the state component be defined as
\begin{equation} \label{eq:rnn:nnarx_states}
	z_{i, k} = \left[\begin{array}{c}
		\bar{y}_{k-N+i} \\
		u_{k-N-1+i}
	\end{array}\right],
\end{equation}
for any $i \in \{1, ..., N \}$, the model \eqref{eq:rnn:nnarx_model} can be given a nonminimal discrete-time state space form
\begin{equation} \label{eq:rnn:nnarx_normal_form}
\begin{dcases}
		z_{1, k+1} = z_{2, k} \\
		\quad \vdots \\
		z_{N-1,k+1} = z_{N, k} \\
		z_{N, k+1} = \begin{bmatrix}
			\eta(z_{1, k}, z_{2, k}, ..., z_{N, k}, u_{k}; \Phi) \\
			u_{k}
		\end{bmatrix} \\
	\bar{y}_{k} = [I\quad 0] \, z_{N, k}
\end{dcases}.
\end{equation}

\begin{subequations} \label{eq:nnarx:statespace}
Let us now define the nominal state vector as
\begin{equation} \label{eq:nnarx:statespace:state}
	\bar{x}_{k} = [ z_{1, k}^\prime, ..., z_{N, k}^\prime]^\prime \in \mathbb{R}^{n},
\end{equation}
where $n = N(m+p)$. 
Note that, in view of the previous definitions, it holds that $\bar{x}\in \mathcal{\bar{X}}$, where $\mathcal{\bar{X}}$ can be easily expressed in terms of $\mathcal{{U}}$ and $\mathcal{{Y}}$.
Model \eqref{eq:rnn:nnarx_normal_form} can be compactly reformulated as
\begin{equation}
\begin{dcases}
  \bar{x}_{k+1} = {A} \bar{x}_{k} + B_{u} u_{k} + B_{x} \eta(\bar{x}_{k}, u_{k}) \\
  \bar{y}_{k} = C \bar{x}_{k}
\end{dcases}
\end{equation}
where $A$, $B_u$, $B_x$, and $C$ are:
\begin{equation}\label{eq:nnarx:statespace:matrices}
	\begin{aligned}
		A = {\begin{bmatrix}
		0 & I & 0 & ... & 0 \\
		0 & 0 & I & ... & 0 \\
		\vdots &&& \ddots & \vdots \\
		0 & 0 & 0 & ... & I \\
		0 & 0 & 0 & ... & 0
		\end{bmatrix}}, \quad
		B_u={\begin{bmatrix}
			0 \\
			0 \\
			\vdots \\
			0 \\
			\tilde{B}_u
		\end{bmatrix}},\quad
		B_x={\begin{bmatrix}
			0 \\
			0 \\
			\vdots \\
			0 \\
			\tilde{B}_x
		\end{bmatrix}},\quad
		C= \begin{bmatrix}
			0 & ... & 0 & \tilde{C}
	\end{bmatrix}.
	\end{aligned}
\end{equation}
$0$ and $I$ are null and identity matrices of proper dimensions, and the sub-matrices $\tilde{B}_u$, $\tilde{B}_{x}$,and $\tilde{C}$ are defined as
\begin{equation}
	\tilde{B}_u = \begin{bmatrix}
		0 \\
		I
	\end{bmatrix}, \quad
	\tilde{B}_x = \begin{bmatrix}
		I \\
		0
	\end{bmatrix}, \quad
	\tilde{C} = \begin{bmatrix}
		I & 0
	\end{bmatrix}.
\end{equation}
\end{subequations}

%
\begin{subequations}
\REV{
The regression function $\eta$ in \eqref{eq:nnarx:statespace} is a Feed-Forward Neural Network (FFNN) consisting of $M$ layers of neurons, where each layer is a linear combination of its inputs, followed by a nonlinear activation function. 
Denoting by $\eta_l$ the output of the $l$-th layer, with $l \in \{ 1, ..., M\}$,
\begin{equation}  \label{eq:model:ffnn2}
	\begin{aligned}
		\eta_1 &= \sigma_1 \big( W_1 u_{k} + U_1 x_k + b_1 \big), \\
		\eta_2 &= \sigma_2 \big( W_2 u_{k} + U_2 \eta_1 + b_2 \big), \\
		&\vdots \\
		\eta_M &= \sigma_M \big( W_M u_{k} + U_M \eta_{M-1} + b_M \big), \\
	\end{aligned}
\end{equation}
where $\sigma_l$ denotes the activation function, assumed to be Lipschitz-continuous and zero-centered, i.e., $\sigma_l(0) = 0$.
The regression function $\eta$, whose structure is illustrated in Figure \ref{fig:nnarx}, thus reads as
\begin{equation}  \label{eq:model:ffnn}
		\eta(x_{k}, u_{k}) = U_0 \eta_M + b_0
\end{equation}}
\end{subequations}

\begin{figure}[bt]
	\centering	\includegraphics[width=\linewidth]{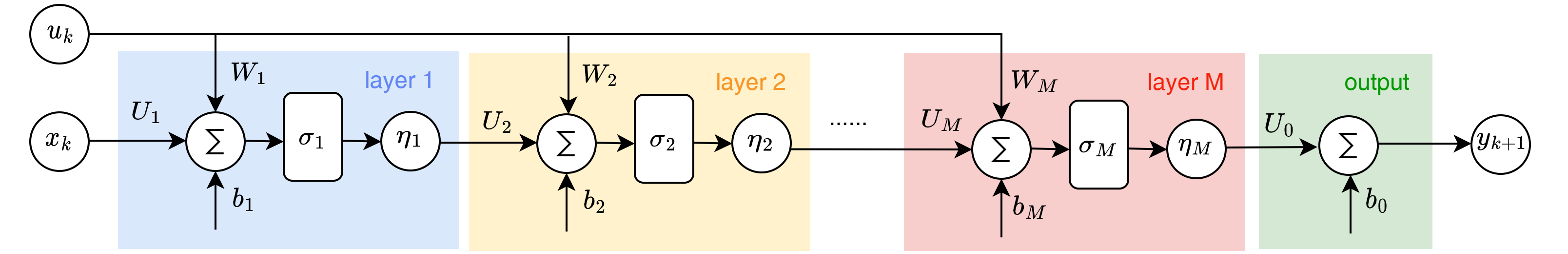}
	\caption{Structure of NNARX model}
	\label{fig:nnarx}
\end{figure}

The matrices $W_l$, $U_l$ and $b_l$ are the weights of each layer, which are training parameters of the network $\Phi = \{ U_0, b_0, \{ U_l, W_l, b_l \}_{ l = 1, ..., M} \}$.
Omitting $\Phi$ for compactness, the NNARX model \eqref{eq:nnarx:statespace} can be written as
\begin{equation} \label{eq:nnarx:compact}
	\mathcal{\bar{S}}: \, \begin{dcases}
		\bar{x}_{k+1} = f(\bar{x}_{k},u_{k}) \\
		\bar{y}_k = C \bar{x}_{k}
	\end{dcases}.
\end{equation}

%

\REV{The goal of the training procedure is to find the set of weights minimizing a performance criterion, known as loss function, which is in general either the prediction or simulation Mean Square Error (MSE).
It is worth pointing out that, as discussed in References \citenum{bonassi2021nnarx} and \citenum{bonassi2022survey}, desirable stability properties, such as the $\delta$ISS property, can also be enforced during the training procedure of the network.
For more details, the interested reader is referred to Appendix \ref{appendix:stability}.}

\subsection{Perturbed Model}
Let the real system dynamics be described by
\begin{equation} \label{eq:nnarx:real}
\mathcal{S}: \begin{dcases}
  {x}_{k+1} = {A} {x}_{k} + B_{u} {u}_{k} + B_x\eta({x}_{k}, {u}_{k})+ B_x\delta({x}_k,{u}_k)\\
  {y}_{k} = C {x}_{k}
\end{dcases}
\end{equation}
where the state $x_k$ enjoys the same structure of the nominal system's state $\bar{x}_k$, \REV{see \eqref{eq:nnarx:statespace:state}, i.e.,
\begin{equation*}
	x_k =  \big[ y_{k-N+1}^{\prime}, u_{k-N}^{\prime}, ..., y_{k}^{\prime}, u_{k-1}^\prime \big]^{\prime}.
\end{equation*}
Note that the perturbed system's dynamics are characterized by an unknown uncertainty $\delta({x}_k,{u}_k)$ that affects the component of $x_k$ that corresponds to $y_k$.}
Let us therefore define the state error $e_k=x_k-\bar{x}_k$ as the deviation between the real state $x_k$ and the nominal one $\bar{x}_k$. 
From \eqref{eq:nnarx:statespace} and \eqref{eq:nnarx:real} it can be obtained that
\begin{equation} \label{eq:errordynamics}
\begin{aligned}
  	e_{k+1} &= Ae_{k} + \underbrace{B_{x} \big[ \eta({x}_{k}, {u}_{k})- \eta(\bar{x}_{k}, u_{k}) + \delta({x}_k,{u}_k) \big] }_{w_k},
\end{aligned}
\end{equation}
which consists of a linear component, $Ae_{k}$, forced by the uncertainty-related term $w_k$, defined as
\begin{equation}\label{eq:perturbation_def}
	w_k = B_{x} \big[ \eta(x_k,{u}_k)- \eta(\bar{x}_k,u_k)+\delta({x}_k,{u}_k) \big].
\end{equation}
\REV{In what follows, it is assumed that such term is bounded $\forall k$ as $w_k \in \mathcal{W}$, where $\mathcal{W}$ is a closed and compact set which depends on $\mathcal{X}$ and $\mathcal{U}$.}
\begin{assumption} \label{distbound}
	The disturbance $w_k$ is bounded in a compact set $\mathcal{W}$ of amplitude $\check{w}$, i.e. for any $k$
\begin{equation} \label{eq:mismatch:upperbound}
	w_k \in \mathcal{W} = \{ w \in \mathbb{R}^n :  \| w \|_{\infty} \leq \check{w} \}
\end{equation}
\end{assumption}

Let us point out that, in the spirit of data-driven control approaches as the one herein described, the value of $\check{w}$ can be estimated from data.
	Indeed, owing to the particular structure of the matrix $B_x$, see \eqref{eq:nnarx:statespace}, the disturbance $w_k$ only affects the entry of $x_{k+1}$ associated to $y_{k+1}$, which makes it easy to estimate $\check{w}$.


\section{Controller design} \label{sec:control}
As previously stated, the goal of this paper is to design a control system such that the controlled output tracks a given constant reference signal ${y^o}$ in a robust way, i.e.
\begin{equation}\label{eq:setp}
    \varepsilon_k = {y^o} - y_k \xrightarrow[k \to \infty]{} 0.
\end{equation}
also in the presence \REV{of the unknown perturbation $\delta(x_k, u_k)$.}

\subsection{The tracking problem in the nominal case} \label{nominal}
To introduce the proposed control structure, we first consider the control design in the nominal case.
To this end, let $({\bar{x}(y^o)},\bar{u}(y^o))$ be a feasible equilibrium of \REVFB{\eqref{eq:nnarx:statespace}}, meaning that $\bar{x}(y^o)\in \mathcal{X}$ and $\bar{u}(y^o)\in \mathcal{{U}}$, whereisfy
\begin{equation}\label{eq:equilibrium}
	\begin{dcases}
		\bar{x}({y^o}) = f(\bar{x}(y^o), \bar{u}(y^o)) \\
  		{y}^o = C \bar{x}(y^o)
	\end{dcases}.
\end{equation}
Then, consider the linearization of the system \eqref{eq:nnarx:compact} around such equilibrium, defined by the matrices
\begin{equation} \label{eq:control:linearization}
\begin{aligned}
	A_\delta = \left. \frac{\partial f(x, u)}{\partial x}  \right\lvert_{(\bar{x}(y^o), \bar{u}(y^o))},  \,\, B_\delta = \left. \frac{\partial f(x, u)}{\partial u}  \right\lvert_{(\bar{x}(y^o), \bar{u}(y^o))}.
\end{aligned}
\end{equation}
\REV{We introduce the following assumption about the linearized model.} 
\REV{\begin{assumption} \label{ass:linearized_gs}
For any equilibrium point $(\bar{x}(y^o), \bar{u}(y^o))$, the linearized system defined by the matrices in \eqref{eq:control:linearization} is asymptotically stable.
\end{assumption}
However, guaranteeing that a generic NNARX model satisfies Assumption \ref{ass:linearized_gs} is not trivial, as it would involve assessing the NNARX local asymptotic stability around an indefinite number of equilibria.
On the other hand, if the NNARX model is trained to be provenly $\delta$ISS (e.g., with the method proposed in Reference \citenum{bonassi2021nnarx}), it is possible to show that Assumption \ref{ass:linearized_gs} is satisfied around any possible equilibrium. This is discussed in the following proposition.}

\begin{proposition} \label{prop:asymptotic_stability}
Consider a nonlinear system in the form \eqref{eq:nnarx:compact}, and assume that it is exponentially $\delta$ISS (Definition \ref{def:exponential_deltaiss}) and that the gradient of $f(x, \bar{u}(y^o))$ with respect to $x$ is Lipschitz continuous.
Then, for each feasible equilibrium $(\bar{x}(y^o), \bar{u}(y^o))$ satisfying \eqref{eq:control:linearization}, matrix $A_\delta$ is Schur stable.
\end{proposition}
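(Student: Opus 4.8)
The plan is to reduce the claim to a classical fact: a $C^1$ map whose iterates contract every pair of nearby trajectories at a uniform geometric rate has a Schur Jacobian at its fixed point, with spectral radius bounded by that rate. To set this up I would freeze the exogenous input at its equilibrium value and work with the autonomous map $g(x) := f(x, \bar u(y^o))$. By construction $\bar x(y^o)$ is a fixed point, $g(\bar x(y^o)) = \bar x(y^o)$, and, by comparison with \eqref{eq:control:linearization}, its Jacobian there is exactly $A_\delta = \nabla g(\bar x(y^o))$. Since the gradient of $f(\cdot, \bar u(y^o))$ is Lipschitz continuous, $g$ is $C^1$; hence the $k$-fold composition $g^{(k)} := g \circ \cdots \circ g$ is differentiable, and the chain rule together with $g^{(j)}(\bar x(y^o)) = \bar x(y^o)$ for all $j$ gives the identity $\nabla g^{(k)}(\bar x(y^o)) = A_\delta^{\,k}$ for every $k \ge 1$.

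Next I would invoke exponential $\delta$ISS (Definition~\ref{def:exponential_deltaiss}): applied to two trajectories of \eqref{eq:nnarx:compact} issued from states $x_a$, $x_b$ and both driven by the constant input sequence identically equal to $\bar u(y^o)$, the input-mismatch term of the estimate vanishes, leaving
\[
\| g^{(k)}(x_a) - g^{(k)}(x_b) \| \;\le\; c\,\lambda^{k}\,\| x_a - x_b \|, \qquad \forall\, k \ge 0,
\]
for suitable $c \ge 1$ and $\lambda \in [0,1)$. Taking $x_a = \bar x(y^o) + \varepsilon v$, $x_b = \bar x(y^o)$ for an arbitrary admissible direction $v$ and $\varepsilon > 0$ small, dividing by $\varepsilon$ and letting $\varepsilon \to 0^+$, the left-hand side tends to $\| \nabla g^{(k)}(\bar x(y^o))\, v \| = \| A_\delta^{\,k} v \|$ by definition of the directional derivative, while the right-hand side is $c\,\lambda^k \|v\|$; hence $\| A_\delta^{\,k} v \| \le c\,\lambda^k \|v\|$, and therefore $\| A_\delta^{\,k} \| \le c\,\lambda^k$. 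Gelfand's formula then yields $\rho(A_\delta) = \lim_{k\to\infty} \| A_\delta^{\,k} \|^{1/k} \le \lim_{k\to\infty}(c\,\lambda^k)^{1/k} = \lambda < 1$, so every eigenvalue of $A_\delta$ lies strictly inside the unit disc, i.e. $A_\delta$ is Schur stable.

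The only genuinely delicate point I anticipate is the admissibility issue in the second step, i.e. ensuring that the exponential $\delta$ISS estimate may legitimately be applied to the perturbed initial state $\bar x(y^o) + \varepsilon v$. If $\delta$ISS is assumed globally on $\mathbb{R}^n$ --- as is natural for $\delta$ISS-trained NNARX models, cf. Reference~\citenum{bonassi2021nnarx} --- then $v$ is arbitrary and there is nothing to check. Otherwise one restricts $v$ and $\varepsilon$ so that $\bar x(y^o) + \varepsilon v$ stays in $\mathcal{X}$; this still leaves a full-dimensional cone of directions, and since $v \mapsto A_\delta^{\,k} v$ is linear the bound on such a cone forces $\| A_\delta^{\,k} \| \le c'\lambda^k$ for some constant $c'$, which Gelfand's formula handles exactly as before. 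The Lipschitz-gradient hypothesis, finally, is used only to ensure $g \in C^1$ and hence the validity of the identity $\nabla g^{(k)}(\bar x(y^o)) = A_\delta^{\,k}$; it plays no other role in the argument.
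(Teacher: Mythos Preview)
Your argument is correct and is genuinely different from the paper's. You freeze the input, use the chain rule to identify $\nabla g^{(k)}(\bar x(y^o))=A_\delta^{\,k}$, pass the exponential $\delta$ISS contraction through a directional-derivative limit, and conclude via Gelfand's formula that $\rho(A_\delta)\le\lambda<1$. The paper instead proceeds by a converse-Lyapunov route: it writes the nonlinear map as $A_\delta\,\delta x_k+\varepsilon(\delta x_k)$, uses the Lipschitz-gradient hypothesis to obtain the quadratic bound $\|\varepsilon(\delta x_k)\|_2\le L_\varepsilon\|\delta x_k\|_2^2$, invokes exponential GAS (from $\delta$ISS with equal inputs) together with a converse Lyapunov theorem to get a quadratic Lyapunov function $V$ for the nonlinear dynamics, and then shows that the same $V$ decreases along the linear dynamics $\delta x_{k+1}=A_\delta\,\delta x_k$ for $\|\delta x_0\|$ small enough. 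Your route is shorter and, as you observe, uses the Lipschitz-gradient assumption only to secure $C^1$-regularity (in fact mere differentiability of $f(\cdot,\bar u(y^o))$ at $\bar x(y^o)$ already suffices for your argument), whereas the paper uses it substantively to control the linearization remainder. The paper's approach, on the other hand, yields as a by-product an explicit Lyapunov function for the linearized system, which can be convenient if one later needs terminal ingredients or decay-rate estimates, though it is not exploited further here.
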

\begin{proof}
	See the Appendix.
\end{proof}

In addition, the following Assumption on the linearized system matrices is also introduced.

\begin{assumption} \label{ass:linearized}
	The tuple $(A_\delta, B_\delta, C)$  is reachable, observable, and does not have invariant zeros equal to 1.
\end{assumption}

Under Assumption \ref{ass:linearized} and in light of Theorem 1 in Reference \citenum{de1997stabilizing}, one can guarantee the existence of an open neighborhood of ${y}^o$, denoted by $\Gamma({y}^o) \subseteq \mathcal{R}^{p}$, where, for any $\tilde{y} \in \Gamma({y}^o)$, there exists an equilibrium $(\tilde{x}(\tilde{y}), \tilde{u}(\tilde{y}))$ satisfying the state and input constraints and such that
\begin{equation}
	\begin{dcases}
		\tilde{x}(\tilde{y}) = f(\tilde{x}(\tilde{y}), \tilde{u}(\tilde{y})) \\
  		\tilde{y} = C \tilde{x}(\tilde{y})
	\end{dcases}.
\end{equation}
This local result allows to conclude that it is possible to move the output reference signal in a neighborhood $\Gamma({y}^o)$ of $y^o$ and still guarantee that a feasible solution to the tracking problem exists.

\begin{figure}[bt]
	\centering
	\includegraphics[width=0.6 \linewidth]{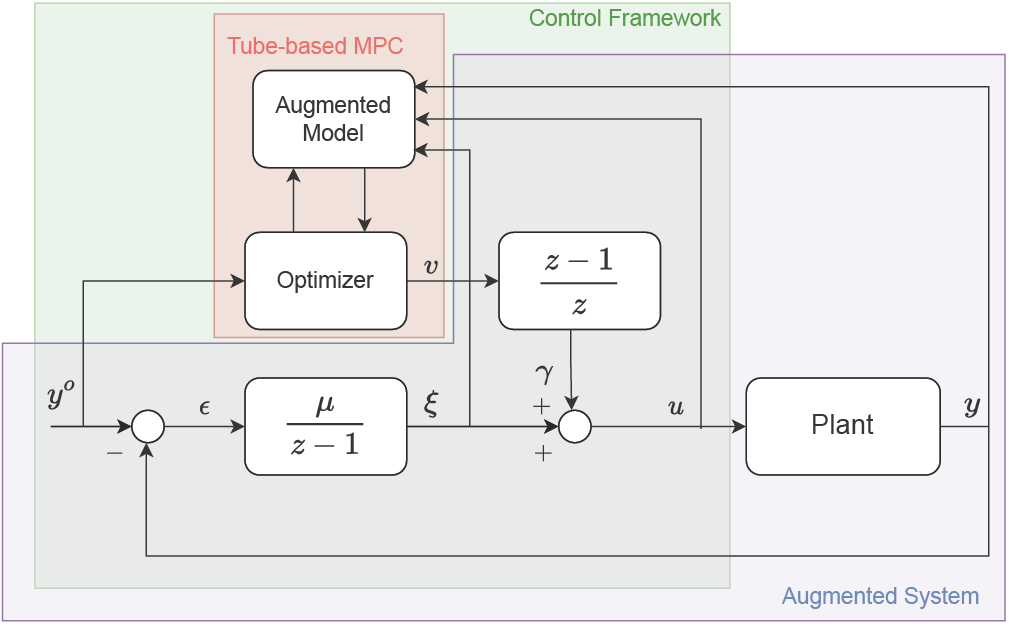}
	\caption{Schematic of the proposed control framework}
	\label{fig:controlscheme}
\end{figure}

\subsection{Control scheme}
The control scheme we propose is shown in Figure \ref{fig:controlscheme}. The regulator is made by three main blocks: an integrator of the output tracking error, an MPC algorithm, and a derivative action.
The rationale is the following:
\begin{itemize}
  \item The integrator acts on the output tracking error $\varepsilon_k = {y^o} - y_k$, so that, in light of the Internal Model Principle\REV{\cite{francis1976internal}, it is ensured that $\varepsilon_k$ asymptotically converges to zero also in the presence of asymptotically constant perturbations.} The gain $\mu$ of the integrator must be selected to guarantee stability properties. This choice is later discussed.
  \item The MPC algorithm provides improved performances in transient conditions and the fulfillment of input and output constraints, while at the steady state, its action is null due to the derivative action acting on its output.
  \item The derivative action allows to achieve stability results for the overall system by means of a standard zero terminal constraint formulation of MPC.
\end{itemize}
More formally, the integral block is described by
\begin{equation} \label{eq:control:integral}
	\xi_{k+1} = \xi_k + \mu ( {y^o} - \bar{y}_k ),
\end{equation}
with $\xi \in R^m$, $\mu \in R^{m,m}$, while the derivative action is
\begin{equation} \label{eq:control:derivative}
	\begin{dcases}
		\theta_{k+1} = v_k \\
		\gamma_k = v_k - \theta_k
	\end{dcases},
\end{equation}
where $v \in R^m$ is the output of the MPC regulator.
As apparent from Figure \ref{fig:controlscheme}, the resulting control action $u_k$ is given by
\begin{equation} \label{eq:control:u}
	u_k = \xi_k + \gamma_k.
\end{equation}
Note that, as already discussed, at steady state $\gamma= 0$, the control variable $u_k$ is uniquely defined by the integral action.

In view of the previous definitions, the overall system to be considered in the design of the MPC algorithm in nominal conditions is obtained from \eqref{eq:nnarx:compact},  \eqref{eq:control:integral}, \eqref{eq:control:derivative}, \eqref{eq:control:u}, and takes the form
\begin{equation}\label{eq:systemdynamics}
	\begin{dcases}
		\bar{x}_{k+1} = f(\bar{x}_k, u_k) \\
		\xi_{k+1} = \xi_k + \mu ({y^o} - \bar{y}_k) \\
		\theta_{k+1} = v_k \\
		\gamma_k = v_k -\theta_k \\
		u_k = \xi_k + \gamma_k \\
		\bar{y}_k = C \bar{x}_k
	\end{dcases},
\end{equation}
Defining the augmented state $\bar{\chi}_k = [\bar{x}_k^\prime, \xi_k^\prime, \theta_k^\prime ]^\prime$ and the augmented output $\bar{\zeta}_k = [\bar{y}_k^\prime, u_k^\prime]^\prime$, \eqref{eq:systemdynamics} can be rewritten in the general form
\begin{equation} \label{eq:control:enlarged_compact}
	\mathcal{\bar{S}}_a : \begin{dcases}
		\bar{\chi}_{k+1} = f_a(\bar{\chi}_k, v_k, {y^o}) \\
		\bar{\zeta}_k = g_a(\bar{\chi}_k, v_k)
	\end{dcases},
\end{equation}
with suitable functions $f_a$, $g_a$.

Let us now denote by $(\bar{\chi}(y^o),\bar{v}(y^o),\bar{\zeta}(y^o))$ the (feasible) state, input, and output equilibrium values corresponding to the output $\bar{y}=y^o$ of the enlarged system $\mathcal{\bar{S}}_a$.
Note that at equilibrium $\bar{\gamma}=0$, $\bar{\xi}=\bar{u}$, and $\bar{v}$ can take any value.
The following proposition can thus be stated, providing guidelines for the design of the integrator gain $\mu$. 

\begin{proposition}\label{prop:integrator}
Under \REV{Assumption \ref{ass:linearized_gs} and Assumption \ref{ass:linearized}}, there exists $\check{\mu} > 0$ such that, for any $\hat{\mu} \in (0, \check{\mu})$, the integrator gain
	\begin{equation} \label{eq:control:int_gain}
		\mu = \hat{\mu} \left [C (I - A_\delta)^{-1} B_\delta \right ]^{-1}
	\end{equation}
ensures that the enlarged system \eqref{eq:control:enlarged_compact} is \REV{locally asymptotically stable around the target equilibrium} $(\bar{\chi}(y^o), \bar{v}(y^o), \bar{\zeta}(y^o))$.
\end{proposition}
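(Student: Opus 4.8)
The plan is to linearize the autonomous enlarged dynamics around the target equilibrium and to show that the resulting closed-loop matrix is Schur stable whenever $\hat{\mu}$ is small enough; local asymptotic stability of \eqref{eq:control:enlarged_compact} then follows from the Lyapunov indirect method, since $\eta$ (hence $f$ and $f_a$) is continuously differentiable. Because at any equilibrium $\bar{\gamma} = 0$ and the MPC contribution vanishes, it suffices to analyze \eqref{eq:systemdynamics} with $v_k$ frozen at its equilibrium value $\bar{v}(y^o)$. Introducing the deviations $\tilde{x}_k = \bar{x}_k - \bar{x}(y^o)$, $\tilde{\xi}_k = \xi_k - \bar{\xi}$, $\tilde{\theta}_k = \theta_k - \bar{\theta}$, using \eqref{eq:control:linearization} together with $\tilde{u}_k = \tilde{\xi}_k - \tilde{\theta}_k$ and the identity $y^o = C\bar{x}(y^o)$, the linearized dynamics take the form $\tilde{\chi}_{k+1} = F_a \tilde{\chi}_k$ with $\tilde{\chi}_k = [\tilde{x}_k^\prime, \tilde{\xi}_k^\prime, \tilde{\theta}_k^\prime]^\prime$ and
\begin{equation*}
F_a = \begin{bmatrix} A_\delta & B_\delta & -B_\delta \\ -\mu C & I & 0 \\ 0 & 0 & 0 \end{bmatrix}.
\end{equation*}
Permuting the $\tilde{\theta}$-block to the top makes $F_a$ block lower triangular, with a deadbeat diagonal block $0$ (originating from the derivative relation $\theta_{k+1} = v_k$) and the other diagonal block $M(\mu) = \left[\begin{smallmatrix} A_\delta & B_\delta \\ -\mu C & I \end{smallmatrix}\right]$. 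Hence $\mathrm{spec}(F_a) = \{0\}\cup\mathrm{spec}(M(\mu))$, and the problem reduces to showing that $M(\mu)$ is Schur for $\hat{\mu}>0$ small.

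For $\mu = 0$ the matrix $M(0) = \left[\begin{smallmatrix} A_\delta & B_\delta \\ 0 & I \end{smallmatrix}\right]$ is block triangular. By Assumption \ref{ass:linearized_gs} (equivalently, Proposition \ref{prop:asymptotic_stability}) $A_\delta$ is Schur, so $1\notin\mathrm{spec}(A_\delta)$ and the eigenvalue $1$ of $M(0)$ has algebraic multiplicity exactly $m$; it is moreover semisimple, its right eigenspace being spanned by the columns of $\left[\begin{smallmatrix}(I-A_\delta)^{-1}B_\delta \\ I\end{smallmatrix}\right]$ and a matched basis of left eigenvectors being the rows of $[\,0\ \ I\,]$. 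By Assumption \ref{ass:linearized} the triple $(A_\delta, B_\delta, C)$ is minimal and has no invariant zero at $z=1$, so $G(1):=C(I-A_\delta)^{-1}B_\delta$ is nonsingular; hence $\mu$ in \eqref{eq:control:int_gain} is well defined and $M(\mu) = M(0) + \hat{\mu}E$ with $E = \left[\begin{smallmatrix} 0 & 0 \\ -G(1)^{-1}C & 0 \end{smallmatrix}\right]$. Standard first-order eigenvalue perturbation then gives that the $m$ eigenvalues of $M(\mu)$ branching from $1$ coincide with those of
\begin{equation*}
I + \hat{\mu}\,[\,0\ \ I\,]\, E \begin{bmatrix} (I-A_\delta)^{-1}B_\delta \\ I \end{bmatrix} + O(\hat{\mu}^2) = (1-\hat{\mu})I + O(\hat{\mu}^2),
\end{equation*}
which lie strictly inside the open unit disk for small $\hat{\mu}$, while the remaining $n$ eigenvalues of $M(\mu)$ stay in a neighbourhood of $\mathrm{spec}(A_\delta)\subset\{|z|<1\}$. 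Consequently there exists $\check{\mu}>0$ such that $M(\mu)$, and therefore $F_a$, is Schur for every $\hat{\mu}\in(0,\check{\mu})$; by the indirect Lyapunov method this yields local asymptotic stability of $(\bar{\chi}(y^o),\bar{v}(y^o),\bar{\zeta}(y^o))$ for \eqref{eq:control:enlarged_compact}.

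The technical core — and the only delicate step — is the eigenvalue-perturbation argument, i.e.\ showing that the $m$ eigenvalues pinned at $1$ when $\mu = 0$ move into the open unit disk as $\hat{\mu}$ grows. This is exactly what the structured gain \eqref{eq:control:int_gain} achieves: it renders the first-order variation of those eigenvalues equal to $-\hat{\mu}$. The argument relies on $G(1)$ being invertible (hence on Assumption \ref{ass:linearized}, which rules out an invariant zero at $1$) and on $1\notin\mathrm{spec}(A_\delta)$ (hence on Assumption \ref{ass:linearized_gs} / Proposition \ref{prop:asymptotic_stability}, which also guarantee that the eigenvalue $1$ of $M(0)$ is semisimple so that the first-order expansion applies). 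The remaining ingredients — assembling $f_a$ and its Jacobian $F_a$, the permutation to block-triangular form, and invoking the indirect method — are routine.
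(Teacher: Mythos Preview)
Your argument is correct. The linearization of \eqref{eq:systemdynamics} at $(\bar{\chi}(y^o),\bar{v}(y^o))$ with $v_k$ frozen indeed yields the Jacobian $F_a$ you wrote; the block-triangular reduction isolating the deadbeat $\theta$-block is valid; the identification of the semisimple eigenvalue $1$ of $M(0)$ with the stated left/right eigenbases is accurate (and $LR=I$ holds, so the first-order formula applies without further normalization); and the computation $LER=-I$ shows precisely why the choice \eqref{eq:control:int_gain} pushes all $m$ critical eigenvalues toward $1-\hat{\mu}$. Invertibility of $G(1)$ follows from Assumption \ref{ass:linearized} exactly as you argue, since for a square minimal realization with $1\notin\mathrm{spec}(A_\delta)$ the absence of an invariant zero at $1$ is equivalent to $\det G(1)\neq 0$.

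The paper, however, does not carry out this analysis: its proof consists of a single sentence invoking Theorem 4 of Reference \citenum{scattolini1985parameter}, which provides a general small-gain result for discrete-time multivariable integral regulators under Schur stability, minimality, and absence of invariant zeros at $1$. Your approach is therefore a genuinely different route. What the paper's route buys is brevity and reliance on an established closed-form bound for $\check{\mu}$; what your route buys is self-containedness and transparency---in particular, it makes explicit that the role of the structured gain \eqref{eq:control:int_gain} is to render the first-order sensitivity matrix $LER$ equal to $-I$, so that all $m$ marginal eigenvalues move radially inward at the same rate. One minor remark: you assume $\eta$ is continuously differentiable to invoke the indirect method; the paper only states that the activations are Lipschitz, but the very definition of $A_\delta$, $B_\delta$ in \eqref{eq:control:linearization} (and the Lipschitz-gradient hypothesis in Proposition \ref{prop:asymptotic_stability}) already presupposes this smoothness, so the assumption is consistent with the paper's setting.
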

\begin{proof}
	\REV{In view of the Schur stability of $A_\delta$, as well as the reachability, observability and absence of derivative actions, Theorem 4 of Reference \citenum{scattolini1985parameter} allows to prove the claim.}
\end{proof}

Note that Proposition \ref{prop:integrator} not only guarantees the existence of a stabilizing integrator gain, but also provides a way to compute it. 
Indeed, $\check{\mu}$ can be estimated numerically, and $\mu$ can be selected via \eqref{eq:control:int_gain} by choosing $\hat{\mu} \in (0, \check{\mu})$.

\subsection{Nominal MPC design}\label{subsec:nominalmpc}
A nominal stabilizing nonlinear MPC law is now designed for the nominal augmented system model. The constrained optimization problem to be solved at any time instant $k$ is
\begin{subequations} \label{eq:nomMPC}
	\begin{align}
		\min_{v_{0|k}, ..., v_{N_p-1|k}} &  \sum_{i=0}^{N_p} \Big[ \big\| \bar{\chi}_{i|k} - \bar{\chi}(y^o) \big\|_Q^2 + \big\| \bar{\zeta}_{i|k} - \bar{\zeta}(y^o) \big\|_R^2 \Big]  \label{eq:nomMPC:cost} \\
		\text{s.t.} \quad & \forall i \in \{ 0, ..., N_p-1 \} \nonumber \\
		&  \bar{\chi}_{0|k} =\bar{\chi}_{k}  \label{eq:nomMPC:x0} \\
		& \bar{\chi}_{i+1|k} = f_a(\bar{\chi}_{i|k}, v_{i|k}, {y^o}) \label{eq:nomMPC:dynamics} \\
		& \bar{\zeta}_{i|k} = g_a(\bar{\chi}_{i|k},v_{i|k}) \label{eq:nomMPC:output} \\
		& S_x\bar{\chi}_{i|k} \in \mathcal{X} \label{eq:nomMPC:state} \\
		& S_u \bar{\zeta}_{i|k} \in \mathcal{U} \label{eq:nomMPC:actuator} \\
		&  \bar{\chi}_{N_p|k} = \bar{\chi}(y^o)\label{eq:nomMPC:terminal}
	\end{align}
\end{subequations}
In the MPC cost function \eqref{eq:nomMPC:cost}, $N_p$ denotes the prediction horizon, and the weight matrix $Q = \text{diag}(Q_x, Q_\xi, Q_\theta)$, where $ \text{diag}(\cdot)$ denotes the block-diagonal operator, penalizes the displacement of the state vector $\bar{\chi}$ from its equilibrium. The weight matrix $R$ is defined as $\text{diag}(R_e,R_u)$, where $R_e$ and $R_u$ penalize the output error and the control effort, respectively.

The augmented nominal model $\mathcal{\bar{S}}_a$, defined in \eqref{eq:control:enlarged_compact}, is used as predictive model, see \eqref{eq:nomMPC:dynamics} and \eqref{eq:nomMPC:output}, and it is initialized with known current values, see \eqref{eq:nomMPC:x0}.
Moreover, \eqref{eq:nomMPC:state} and \eqref{eq:nomMPC:actuator} enforce the saturation constraints on $\bar{x}_{i|k}$ and $u_{i|k}$, where the matrix $S_x$ selects $\bar{x}_{i|k}$ from the state vector $\xi_{i|k}$ and $S_u$ selects $u_{i|k}$ from the output vector $\bar{\zeta}_{i|k}$. Finally the terminal equality constraint \eqref{eq:nomMPC:terminal} is introduced to ensure \REV{the nominal closed-loop stability and recursive feasibility, according to well known arguments\cite{rawlings2017model}. Of course, other strategies guaranteeing the nominal closed-loop stability and recursive feasibility, such as those relying on the definition of a terminal set and a terminal cost\cite{magni2001stabilizing}, can also be adopted: at the price of a slightly more involved design phase, they can yield less conservative control actions\cite{magni2001stabilizing}.}

The solution to problem \eqref{eq:nomMPC} at the time instant $k$ yields the optimal control sequence $\boldsymbol{v}^*_k = \{ v^*_{0|k}, v^*_{1|k}, ..., v^*_{N_p-1|k}\}$. Then, according to the Receding Horizon approach,  only the first element in the sequence is applied and the implicit MPC control law reads
\begin{equation}
    v_k = \kappa_k(\bar{\chi}_k) = v^*_{0|k}.
\end{equation}

The procedure is repeated at the successive time step $k+1$, based on the measured state $\bar{\chi}_{k+1} = [\bar{x}_{k+1}^\prime, \xi_{k+1}^\prime, \theta_{k+1}^\prime ]^\prime$, which leads to a state-feedback control law. \smallskip

\subsection{Robust MPC design}
\label{subsec:tubempc}
The robust MPC controller is designed according to the popular tube-based approach, see Reference \citenum{MAYNE2005robust}. To this end, we compute for the perturbed error system \eqref{eq:errordynamics} a
Robust Positively Invariant (RPI) set $\Omega_x$ such that, if the real system state ${x}_k \in \bar{x}_k \oplus \Omega_x$, then ${x}_{k+i} \in  \bar{x}_{k+i} \oplus \Omega_x$ for all $i>0$, and for any admissible disturbance realization $\{ w_k, ..., w_{k+i-1} \}$.

\REV{In our case, in view of the structure of the state vector $x_k$, consistent with the autoregressive structure of the model,} the system matrix $A$ in \eqref{eq:nnarx:statespace} is Schur stable and nilpotent so that the RPI set is given by\cite{kolmanovsky1996invariantset}

\begin{equation}\label{eq:positive_invariant}
    \Omega_x = \sum_{j=0}^{N-1} A^j\mathcal{W},
\end{equation}
where $\sum_j$ is used to denote the Minkowski set addition and $N$ is the number of past input-output data of the NNARX model.
It is remarkable that, owing to the specific structure of NNARX models, the most critical issue in the design of robust tube-based MPC laws for nonlinear systems, i.e. the computation of $\Omega_x$, can be easily overcome.

In view of the previous considerations, the adopted formulation for nonlinear robust MPC reads as
\begin{subequations} \label{eq:tubeMPC}
	\begin{align}
		\min_{\substack{v_{0|k}, ..., v_{N_p-1|k} \\ \bar{x}_{0|k}}} &
		\sum_{i=0}^{N_p} \Big[ \big\| \bar{\chi}_{i|k} - \bar{\chi}(y^o) \big\|_Q^2 + \big\| \bar{\zeta}_{i|k} - \bar{\zeta}(y^o) \big\|_R^2 \Big]
		\label{eq:tubeMPC:cost} \\
		\text{s.t.} \quad & \forall i \in \{ 0, ..., N_p-1 \} \nonumber \\
		&  S_x \bar{\chi}_{0|k} \in \{ \bar{x}_{k} \} \oplus \Omega_x \label{eq:tubeMPC:x0} \\
		&  S_{\xi}\bar{\chi}_{0|k} =\xi_{k} \label{eq:tubeMPC:xi0} \\
		&  S_{\theta} \bar{\chi}_{0|k} =v_{k}\label{eq:tubeMPC:theta0} \\
		& \bar{\chi}_{i+1|k} = f_a(\bar{\chi}_{i|k}, v_{i|k}, y^o) \label{eq:tubeMPC:dynamics} \\
		& \bar{\zeta}_{i|k} = g_a(\bar{\chi}_{i|k},v_{i|k}) \label{eq:tubeMPC:output} \\
		& S_x \bar{\chi}_{i|k} \in \mathcal{X} \ominus \Omega_x \label{eq:tubeMPC:state} \\
		& S_u \bar{\zeta}_{i|k} \in \mathcal{U} \label{eq:tubeMPC:actuator} \\
		& \bar{\chi}_{N_p|k} = \bar{\chi}(y^o) \label{eq:tubeMPC:terminal_x}
	\end{align}
\end{subequations}

\begin{algorithm}[t]
\caption{Proposed robust offset-free MPC}\label{algo:mpcscheme}
\begin{algorithmic}[]
\State \textbf{Offline Phase:} 
\State train a ($\delta$ISS) NNARX model using the collected data 
\State select the integrator gain $\mu$ according to \eqref{eq:control:int_gain}
\State compute the set $\Omega_x$ according to \eqref{eq:positive_invariant}
\label{phaseoneendwhile}

\\
\State \textbf{Online Phase:} 
\For{each time step $k$}
\State obtain the current state $\bar{\chi}_k$
\State compute the equilibrium $\bar{\chi}(y^o)$ according to \eqref{eq:equilibrium}
\State solve the robust MPC problem \eqref{eq:tubeMPC} 
\State compute $u_k$ according to \eqref{eq:control:u}
\State apply $u_k$ to the  plant
\EndFor\label{phasetwoendwhile}   
\end{algorithmic}
\end{algorithm}

The cost function \eqref{eq:tubeMPC:cost} adopted is the same as in the nominal MPC, see \eqref{eq:nomMPC:cost}, and penalizes the distances of the augmented system's state and output from their respective targets.
The main difference lies, however, in the initialization of the component $\bar{x}_{0|k}$ of the augmented state vector.
Rather than fixing it to the known measured value $\bar{x}_k$, it is considered as a free optimization variable lying in the RPI set $ \bar{x}_{k} \oplus \Omega_x$, see \eqref{eq:tubeMPC:x0}.
Note that $S_x$, $S_\xi$, and $S_\theta$ denote the selection matrices that extract $\bar{x}_{i|k}$, $\bar{\xi}_{i|k}$, and $\bar{\theta}_{i|k}$ from $\bar{\chi}_{i|k}$, respectively.

The nominal augmented model $\mathcal{\bar{S}}_a$ is then used as predictive model, see \eqref{eq:tubeMPC:dynamics} and \eqref{eq:tubeMPC:output}.
To ensure the robust constraint satisfaction, the constraint on state $\bar{x}_{i|k}$ is tightened, see \eqref{eq:tubeMPC:state}.
The input variable $u_{i|k}$ is constrained to fulfill the actuator constraints, see \eqref{eq:tubeMPC:actuator}.

Solving the optimization problem \eqref{eq:tubeMPC}, the optimal control sequence $\boldsymbol{v}_k^\star = \{ v_{0|k}^\star, ..., v_{N_p-1|k}^\star \}$ is retrieved.
Then, according to the Receding Horizon principle, the first optimal control action $v_{0|k}^\star$ is applied, and at the following time-step the entire procedure is repeated.
The corresponding implicit robust control law is denoted by
\begin{equation} \label{eq:tube:contrllaw}
	v_k = \kappa_k(\bar{\chi}_k) = v_{0|k}^\star
\end{equation}
Notably, such control law keeps the trajectory of the state  of the perturbed system \eqref{eq:nnarx:real}, i.e., $x_k$, in the robust control invariant set $\Omega_x$ centered along the nominal state trajectory $\bar{x}_k$, which implies that the output $y_k$ of the perturbed system is constrained in the set $\Omega_y = C \Omega_x$, centered around the nominal output $\bar{y}_k$.
Moreover, \REV{if the plant states converge to constant values, the output tracking error is guaranteed to be asymptotically null by the Internal Model Principle\cite{francis1976internal}, i.e., robust offset-free tracking is attained.}

Note that, as proved in Reference \citenum{MAYNE2005robust}, the RPI set $\Omega_x$ (centered around the nominal state trajectory) is robustly exponentially stable for the controlled uncertain system \eqref{eq:nnarx:real}. 
Therefore, recursive feasibility is guaranteed by Proposition 3.14 of Reference \citenum{rawlings2017model}.
\smallskip

\REV{The overall process of the proposed robust offset-free MPC is summarized in Algorithm \ref{algo:mpcscheme}.}

\section{Numerical Example} \label{sec:example}
\subsection{Benchmark system}\label{sec:example:plant}
 A water-heating benchmark system depicted in Figure \ref{fig:plant} is used to test the proposed control framework. The goal of the system is to regulate the temperature of the outlet water $T$ to a desired value with the required flow rate. The inlet water flow rate is assumed to be the same as the outlet water flow rate $w$, keeping the water level in the tank constant. $w_c$ denotes the inlet gas flow rate, which is burnt to heat the metal, and subsequently, heat the water in the tank. $T_i$ and $T_m$ denote the temperature of the inlet water and the metal respectively.  
The system dynamics, derived from the energy balance equations, are described by
\begin{equation}\label{eq:example:plant}
\begin{dcases}
     \dot{T} =\frac{1}{\rho_w A_t z_w}\left [ w \left ( T_{i} - T \right )+\frac{k_{lm} A_t}{c_w}\left ( T_{m }- T \right ) \right ], \\
    \dot{T}_m = \frac{1}{M_{m} c_{m}}\left [ -k_{lm} A_t \left ( T_{m} - T \right )+\sigma k_{f} w_{c} \left ( T_{f}^{4}-T_{m}^{4} \right ) \right ],
\end{dcases}
\end{equation}
where the values of the parameters appearing in \eqref{eq:example:plant} are reported in Table \ref{tab:system_parameters}.

The model has one manipulable input $u=w_c$ , one output $y=T$ and two system states $x= [T,T_m]^{'}$. Both $T_i$ and $w$ in the system dynamics \eqref{eq:example:plant} can be treated as disturbances, i.e., $d = [T_i,w]^{'}$. The nominal values of both disturbances are reported in Table \ref{tab:system_parameters}, together with other parameters. Moreover, there is a constraint on the input,
\begin{equation}\label{eq:input:region}
    w_c \in [0.05, 0.18].
\end{equation}

\begin{figure}[bt]
	\centering
	\includegraphics[width=250pt,height=200pt]{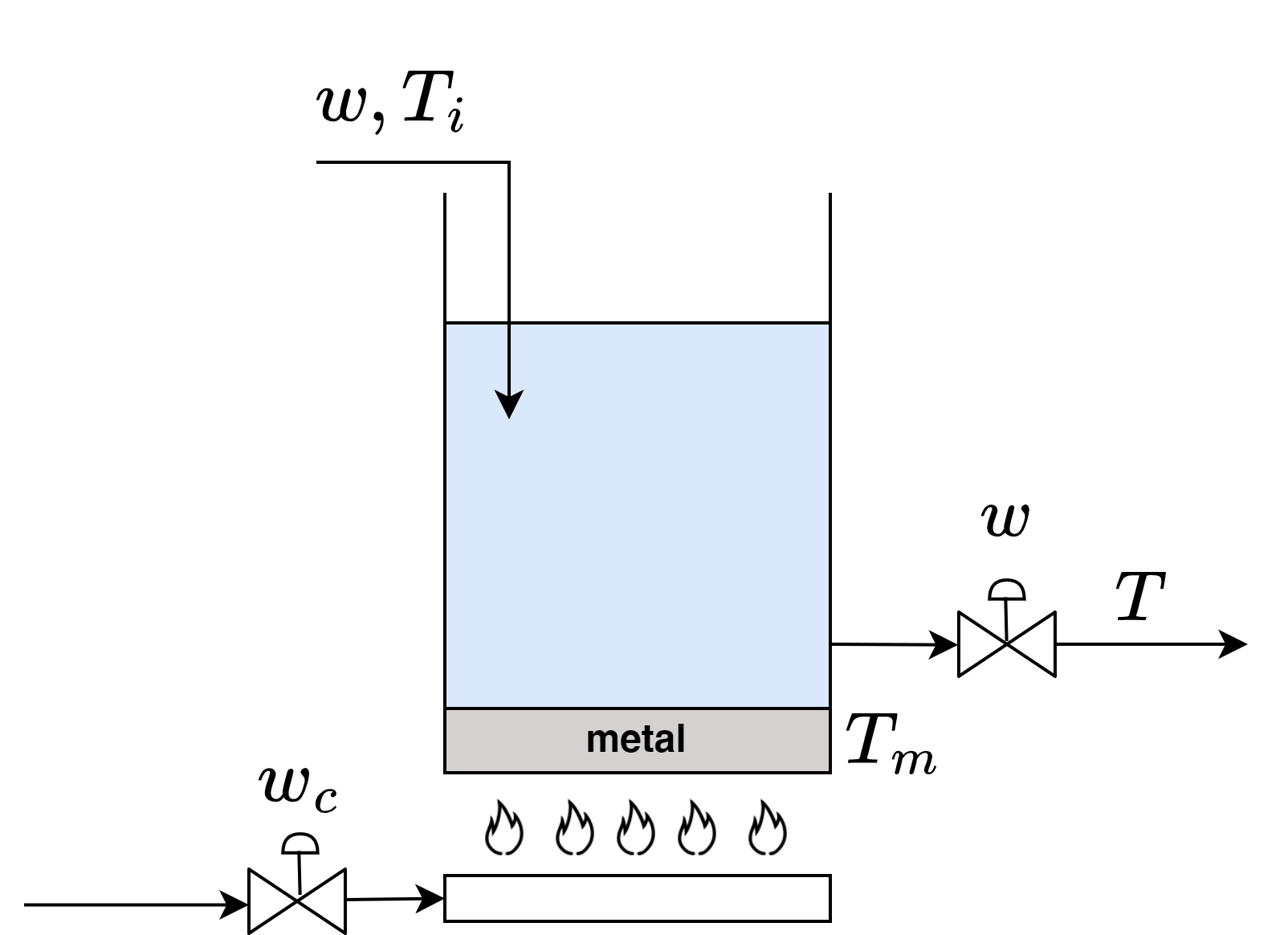}
	\caption{Water-heating system illustration}
	\label{fig:plant}
\end{figure}
The benchmark system has been implemented in Simulink. The simulations have been carried out to collect the input-output data, which will be used to train Neural NARX model and test the proposed control algorithm.
\begin{table}[t]
\centering
\caption{Benchmark system parameters}
\label{tab:system_parameters}
\begin{tabular*}{350pt}{@{\extracolsep\fill}cccc@{\extracolsep\fill}}
\toprule
\textbf{Parameter}  & \textbf{Description} & \textbf{Value}  & \textbf{Units} \\ \midrule
$A_t$ & Tank's cross-section &$\frac{\pi}{4}$ & $m^2$ \\
$\rho_{w}$ & Water's density & $997.8 $ & $\frac{kg}{m^3}$\\
$c_w$ & Water's specific heat & $4180.0$ & $\frac{J}{kg \cdot K}$\\
$M_{m}$ & Metal plate's mass & $617.32$ & $kg$\\
$c_m$ & Metal's specific heat & $481.0$ & $\frac{J}{kg \cdot K}$\\
$\sigma$ & Radiation coefficient & $5.67\times 10^{-8}$ & $\frac{W}{m^2 \cdot K^4}$ \\
$k_{lm}$ & Heat exchange coefficient & $3326.4$ & $\frac{kg}{s^3 \cdot K}$ \\
$T_f$ & Flame's temperature & $1200$ & $K$\\
$k_f$ & Heat exchange coefficient & $8.0$ & $\frac{m^2 \cdot s}{kg}$\\
$z_w$ & Water level & $2.0$ & $m$ \\
$\bar{w}$ & Nominal water flow rate & $1.0$ & $\frac{kg}{s}$ \\
$\bar{T}_i$ & Nominal inlet water temperature & $298$ & $K$ \\
\bottomrule
\end{tabular*}
\end{table}

\subsection{NNARX model training}
To generate the dataset for training the Neural NARX model of the plant \eqref{eq:example:plant}, the simulator has been fed with a Multilevel Pseudo-Random Signal (MPRS) to properly excite the system in the operating region \eqref{eq:input:region}.
One input-output trajectory $\boldsymbol{T}_{exp}$, with a total length of $2500$ time steps, has been collected with sampling time $\tau_s = 120 s$.
According to the Truncated Back-Propagation Through Time (TBPTT) principle\cite{jaeger2002tbptt}, $N_{t} = 120$ subsequences of length $T_s^{'} = 400$ time steps, have been extracted from $\boldsymbol{T}_{exp}$. We denote these sequences $(\boldsymbol{u}^{\{i\}}, \boldsymbol{y}^{\{i\}})$ with $i \in \mathcal{I}_t = \{ 1, ..., N_t \}$.
The validation set and test set consist of  $N_v=30$ and $N_f=1$ subsequences respectively, with length $T_s = 1000$ time steps. Both sets are constructed by using completely independent simulation data.
We denote validation and test set by the set of indices $\mathcal{I}_v = \{ N_t + 1, ..., N_t + N_v \}$ and $\mathcal{I}_f = \{N_t + N_v + 1, ..., N_s \}$, respectively,  where $N_s = N_t + N_v + N_f$.

The training procedure has been carried out with PyTorch 1.9 and Python 3.9.
The adopted Neural NARX model features a single-layer ($M=1$) FFNN  with $30$ neurons and activation function $\sigma_t = \tanh$ in \eqref{eq:model:ffnn2}. The look-back horizon is $N=5$. During the training procedure, the Mean Square Error (MSE) over the training set $\mathcal{I}_t$ is minimized. \REV{ A suitable regularization term is included in the cost function \cite{bonassi2021nnarx} so that the model enjoys the $\delta$ISS property}. The training takes 1288 epochs until the modeling performance over the validation set $\mathcal{I}_v$ saturates.
\begin{figure}[t]
	\centering
	\includegraphics[width=0.75\columnwidth]{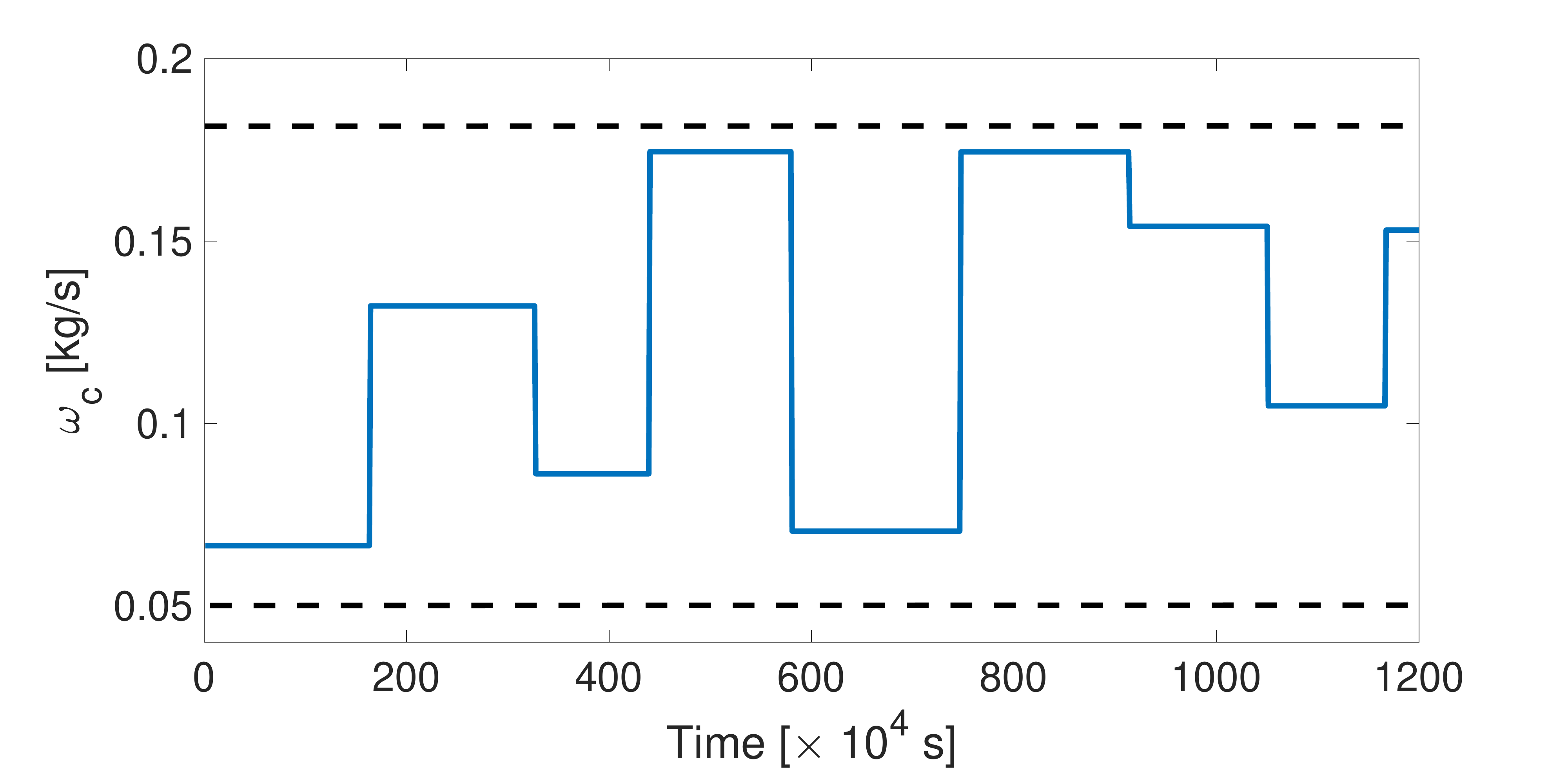}
	\caption{Input sequence used for validation.}
	\label{fig:test_input}
	\hspace{0.5cm}
	\includegraphics[width=0.75\columnwidth]{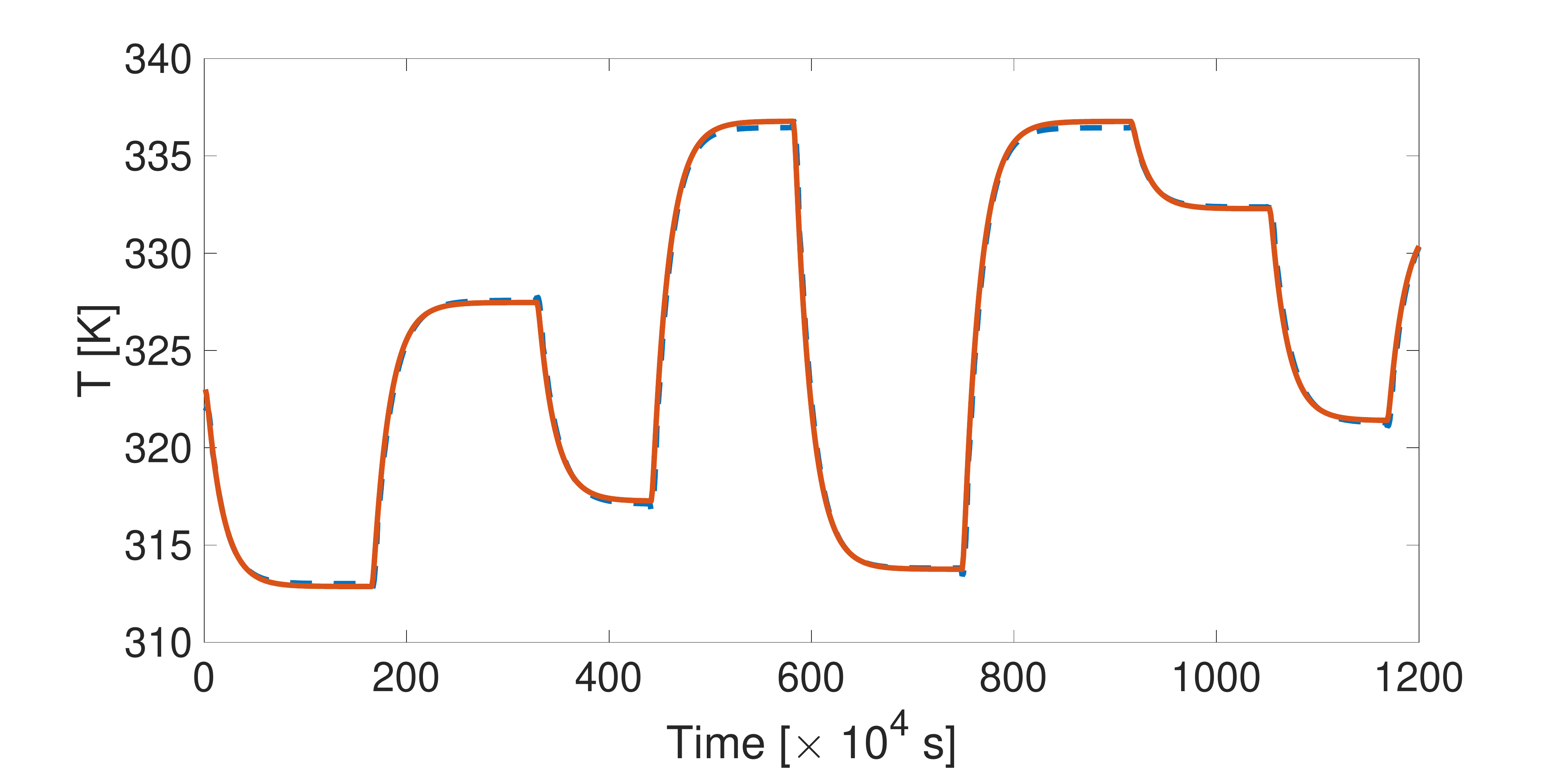}
	\caption{Open-loop prediction (blue dashed line) versus ground truth (red continuous line).}
	\label{fig:test_output}

\end{figure}
Lastly, the trained Neural NARX has been tested on the independent test set $\mathcal{I}_f$. In Figure \ref{fig:test_input}, the test input sequence $\boldsymbol{u}^{\{ts\}}$ is shown. In Figure \ref{fig:test_output}, the resulting output sequence $\boldsymbol{y}^{\{ts\}} = \left \{ y^{\{ts\}}_0, y^{\{ts\}}_1,...,y_{T_s}^{\{ts\}}\right \}$ of the Neural NARX model is compared to the ground truth sequence $\bar{\boldsymbol{y}}^{\{ts\}} =\left\{ \bar{y}^{\{ts\}}_0,...,\bar{y}^{\{ts\}}_{T_s}\right\}$. To quantitatively evaluate the model performance, we introduce the FIT index, defined as
\begin{equation}
	\text{FIT} = 100 \left( 1 -  \frac{\sum_{k=0}^{T_s}\| y_k^{\{ts\}} - \bar{y}_{k}^{\{ts\}}\|_2}{\sum_{k=0}^{T_s}\| \bar{y}_{k}^{\{ts\}} - \bar{y}_{avg} \|_2} \right),
\end{equation}
where $\bar{y}_{avg}$ is the average over sequence $\bar{\boldsymbol{y}}^{\{ts\}}$.
The trained Neural NARX model has scored $92.8 \%$ FIT index, which entails satisfactory modeling performance.

\begin{figure}[t]
	\centering
	\includegraphics[width=0.75\columnwidth]{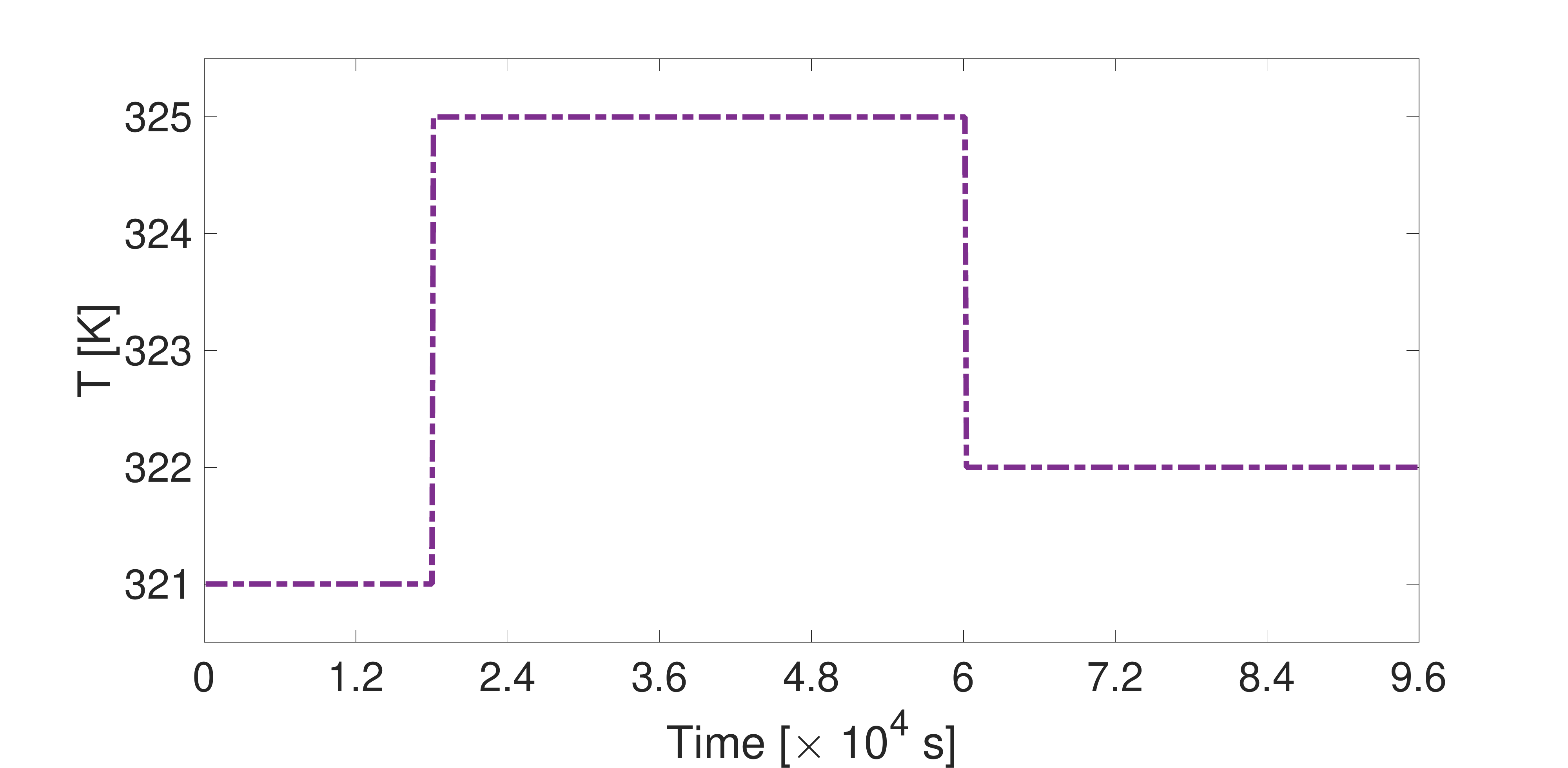}
	\caption{Piecewise-constant output reference trajectory.}
	\label{fig:reference}
\end{figure}

\subsection{Control synthesis}
The implementation of the control strategy described in Section \ref{sec:control} is presented in the following. 
The temperature reference signal that will be considered for assessing the closed-loop performances is depicted in Figure \ref{fig:reference}. 
\smallskip

\subsubsection*{Nominal MPC}

Let us first consider the nominal offset-free tracking scheme proposed in Section \ref{subsec:nominalmpc}. The goal of such control strategy is to track piecewise constant water temperature reference signals relying on the nominal model of the system.

The hyperparameters for MPC alogrithm are reported in Table \ref{tab:mpc_hyperparameters}. 
Note that $Q_\theta$ is chosen much smaller than $Q_x$ and $Q_\xi$, as the sole purpose of such weight matrix is to penalize large values of $\theta_k$.
The integrator gain $\mu = 0.14$ is selected according to Proposition \ref{prop:integrator}, as it ensures the local asymptotic stability of the augmented system around any equilibrium of interest.

It should be noted that, for every new reference point $y^o$ in Figure \ref{fig:reference}, the associated equilibrium point $({\bar{x}(y^o)},\bar{u}(y^o))$ should be computed again by solving \eqref{eq:equilibrium}, together with the target value for the augmented system $(\bar{\chi}(y^o),\bar{v}(y^o),\bar{\zeta}(y^o))$. 

\begin{figure}[t]
	\centering
	\includegraphics[width=0.75\columnwidth]{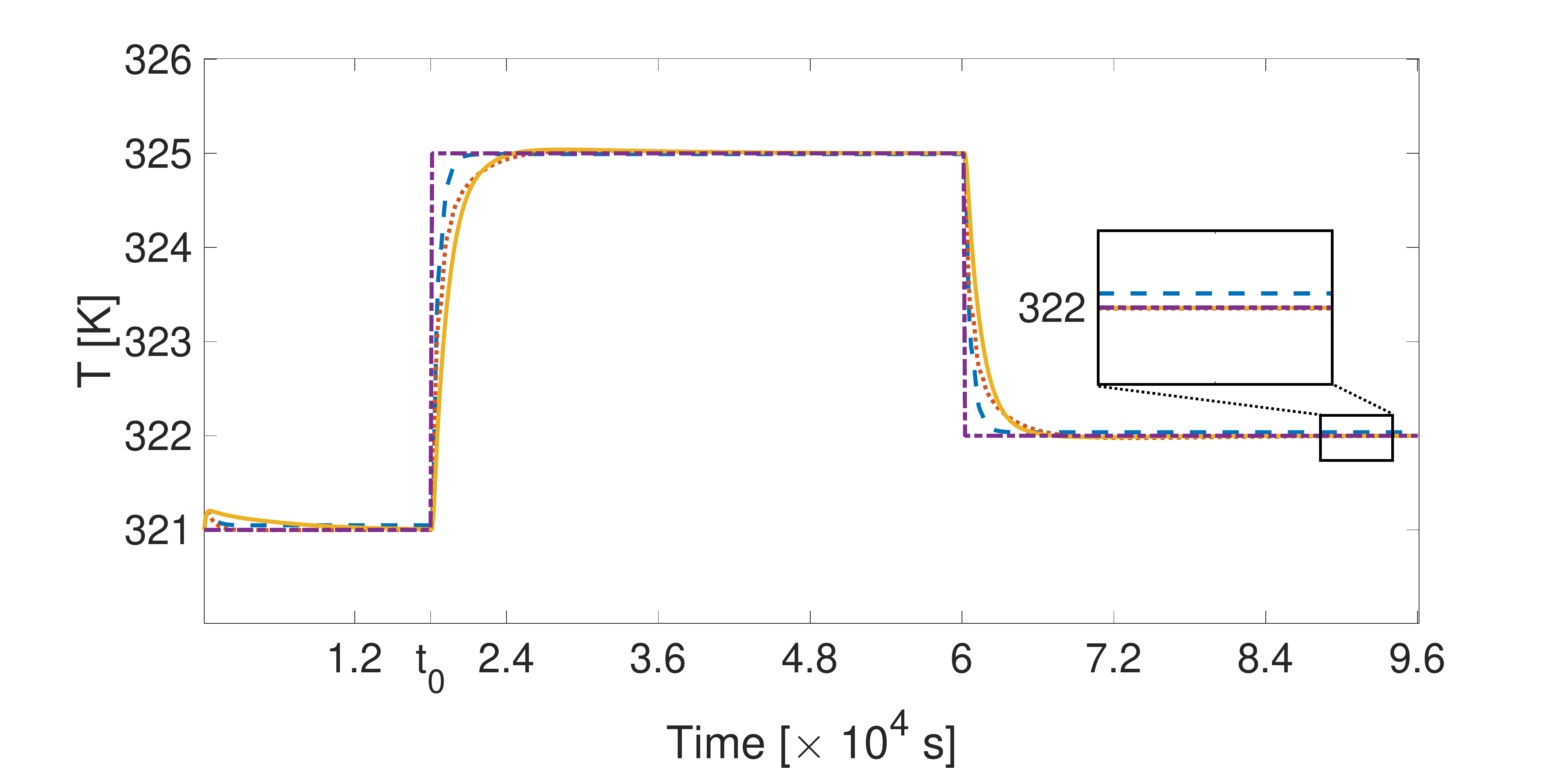}
	\caption{Closed-loop output tracking performances of the proposed nominal approach (red dotted line), proposed robust approach (yellow continuous line) compared to that of the DEB-MPC (blue dashed line). The reference is represented by the purple dashed-dotted line.
	}
	\label{fig:y_compare}
	\centering
	\includegraphics[width=0.75\columnwidth]{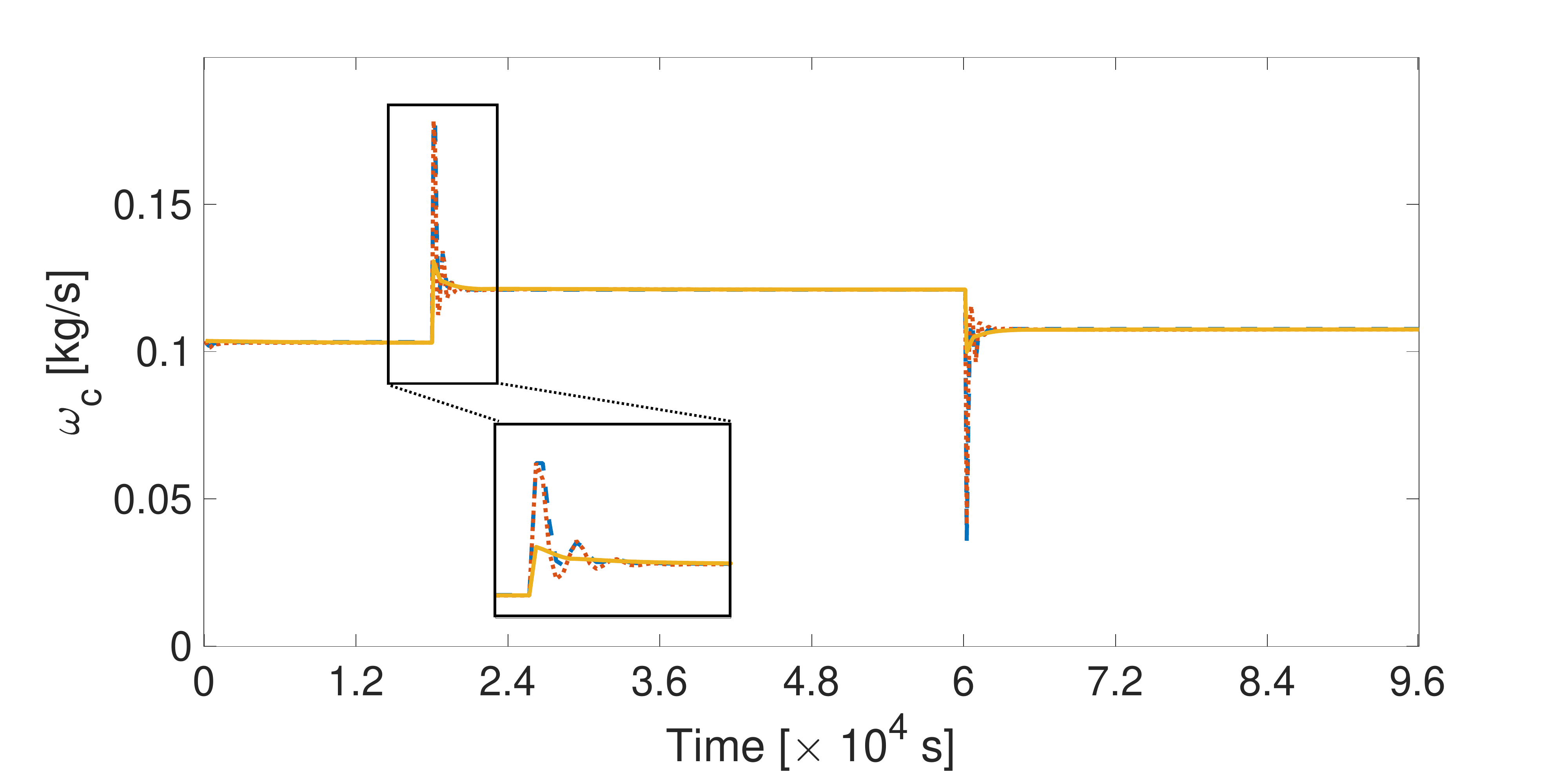}
	\caption{Control input of the proposed nominal approach (red dotted line), proposed robust approach (yellow continuous line) compared to that of the DEB-MPC (blue dashed line). }
	\label{fig:u_compare}
\end{figure}

\subsubsection*{Robust MPC}
Having developed the Nominal MPC architecture, let us now address the synthesis of the robust control scheme proposed in Section \ref{subsec:tubempc}. 
The first step is to estimate the upper bound of the uncertainty $w_k$, i.e. $\check{w}$, see \eqref{eq:mismatch:upperbound}. 
In the spirit of data-driven control approach, in this paper we estimate the upper bound from simulation data.

In view of \eqref{eq:perturbation_def} and \eqref{eq:nnarx:statespace:matrices}, $\check{w}$ can be retrieved as the maximum mismatch between the NNARX open-loop output simulation, i.e. $\bar{y}_k$, and the corresponding true plant's output, i.e. $y_k$.
To this end, the plant \eqref{eq:example:plant} has been simulated with $30$ random input trajectories, and the maximum difference between the corresponding measured output sequences and the NNARX open-loop simulation is computed, yielding $\check{w} = 0.031$.
The set $\Omega_x$ is finally computed via \eqref{eq:positive_invariant}.
\smallskip

\subsubsection*{Comparison and Discussion}
The popular offset-free Disturbance Estimation Based MPC\cite{morari2012nonlinear} (DEB-MPC) has been implemented as baseline to which compare the proposed control architecture.
This standard offset-free control scheme requires the augmentation of the NNARX system model with a fictitious disturbance on the output, which is customarily assumed to be constant.
Such disturbance is then estimated via suitably designed state observer: here a Moving Horizon Estimator (MHE) is employed.
For the sake of a fair comparison, the DEB-MPC design parameters (i.e., the prediction horizon and weights) are chosen in line with that reported in Table \ref{tab:mpc_hyperparameters}.

The closed-loop tracking performances of the proposed nominal and robust approaches, together with that of the baseline DEB-MPC regulator are reported in Figure \ref{fig:y_compare}. 
Albeit the DEB-MPC approach achieves better performance during transients, it is unable to attain asymptotic offset-free tracking, presumably due to the inadequacy of the disturbance model, whose accuracy is known to be paramount to attain the desired closed-loop performances \cite{tatjewski2020algorithms}.
On the other hand, \REV{both the nominal MPC and the robust MPC attain asymptotic zero-error, owing to the presence of integral action}. 

In Figure \ref{fig:u_compare}, the control actions requested by the three approaches are depicted.
It is apparent that all of them allow to fulfil input saturation constraints.
It can also be noted that the proposed robust approach requires the most moderate control action compared to the other two approaches, thanks to the tube-based design.

\begin{table}[t]
\centering
\caption{Hyperparameters for MPC algorithm}
\label{tab:mpc_hyperparameters}
\begin{tabular}{cc|cc}
\toprule
\textbf{Parameter}  & \textbf{Adopted Value}& \textbf{Parameter}  & \textbf{Adopted Value} \\ \midrule
$N_p$ & 50&$Q_x$ &$\text{diag}(R,R,R,R,R)$\\
$R_e$ & 10&$Q_\xi$ & 1\\
$R_u$ & 0.1&$Q_\theta$ & $10^{-5}$\\
\bottomrule
\end{tabular}
\end{table}

\REV{The nonlinear optimization problems in this numerical example have been solved with CasADi \cite{Andersson2019}: the empirical distribution of the computational time is depicted in Figure  \ref{fig:opt_time}.
Note that most of the instances of the optimization problem are solved in less than $2.5~s$, well below the sampling time of $120~s$.}

In order to show the robustness of the tube-based MPC approach, the closed-loop output trajectory starting from $t_0 = 1.8 \times 10^4 s$ is shown in Figure \ref{fig:y_tube_predict}, where $t_0$ denotes the time at which the setpoint is changed from $321 K$ to $325 K$, see Figure \ref{fig:reference}. 
The nominal output trajectory predicted at $t_0$ using the optimization in \eqref{eq:tubeMPC} is also shown.
\REV{
It is clear that the tube-based MPC allows to maintain the closed-loop output trajectory within the tube $\Omega_y$ centered around the nominal output, in spite of the plant-model mismatch.}

\begin{figure}[t]
\centering
\includegraphics[width=0.8\columnwidth]{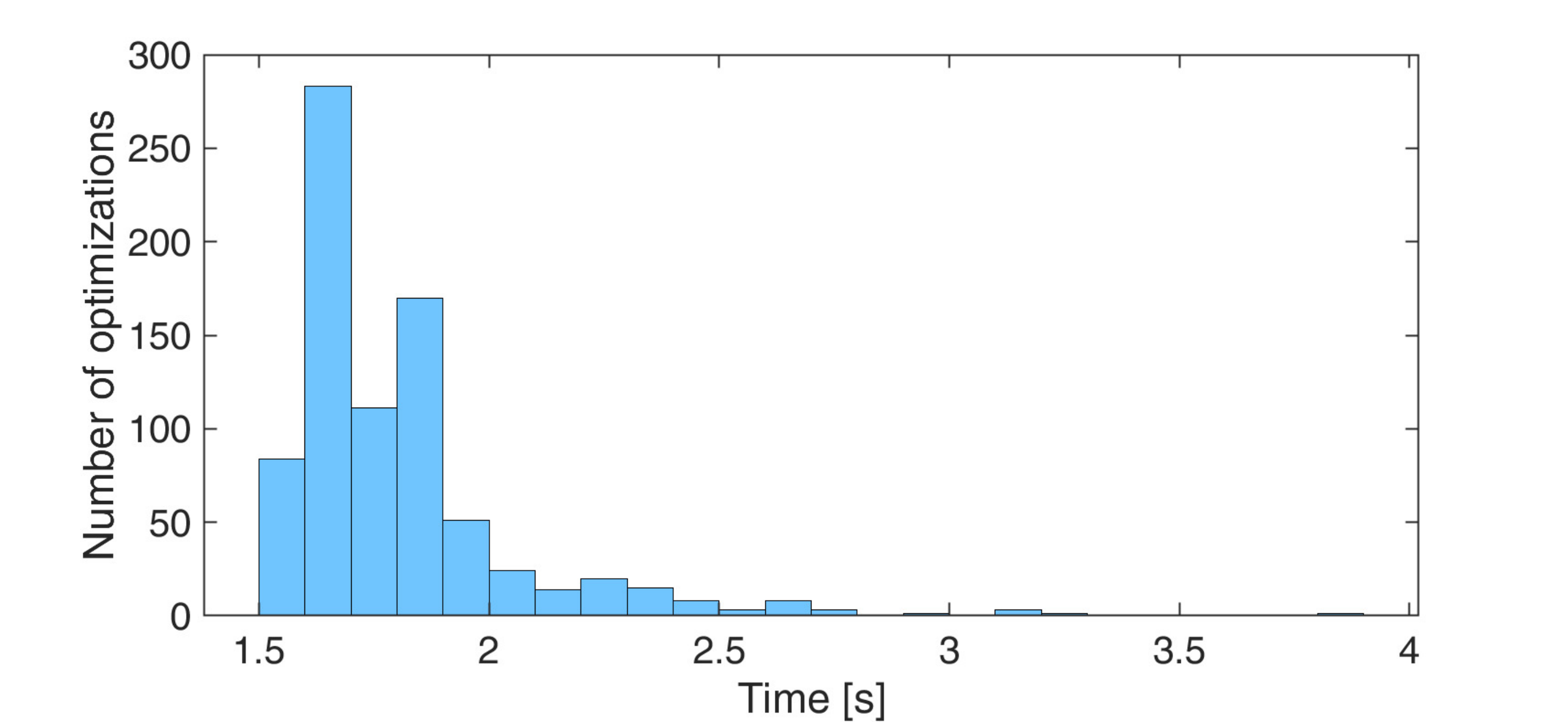}
	\caption{Distribution of the time required for solving optimization problems
	}
	\label{fig:opt_time}
\end{figure}

\begin{figure}[t]
	\centering
\includegraphics[width=0.8\columnwidth]{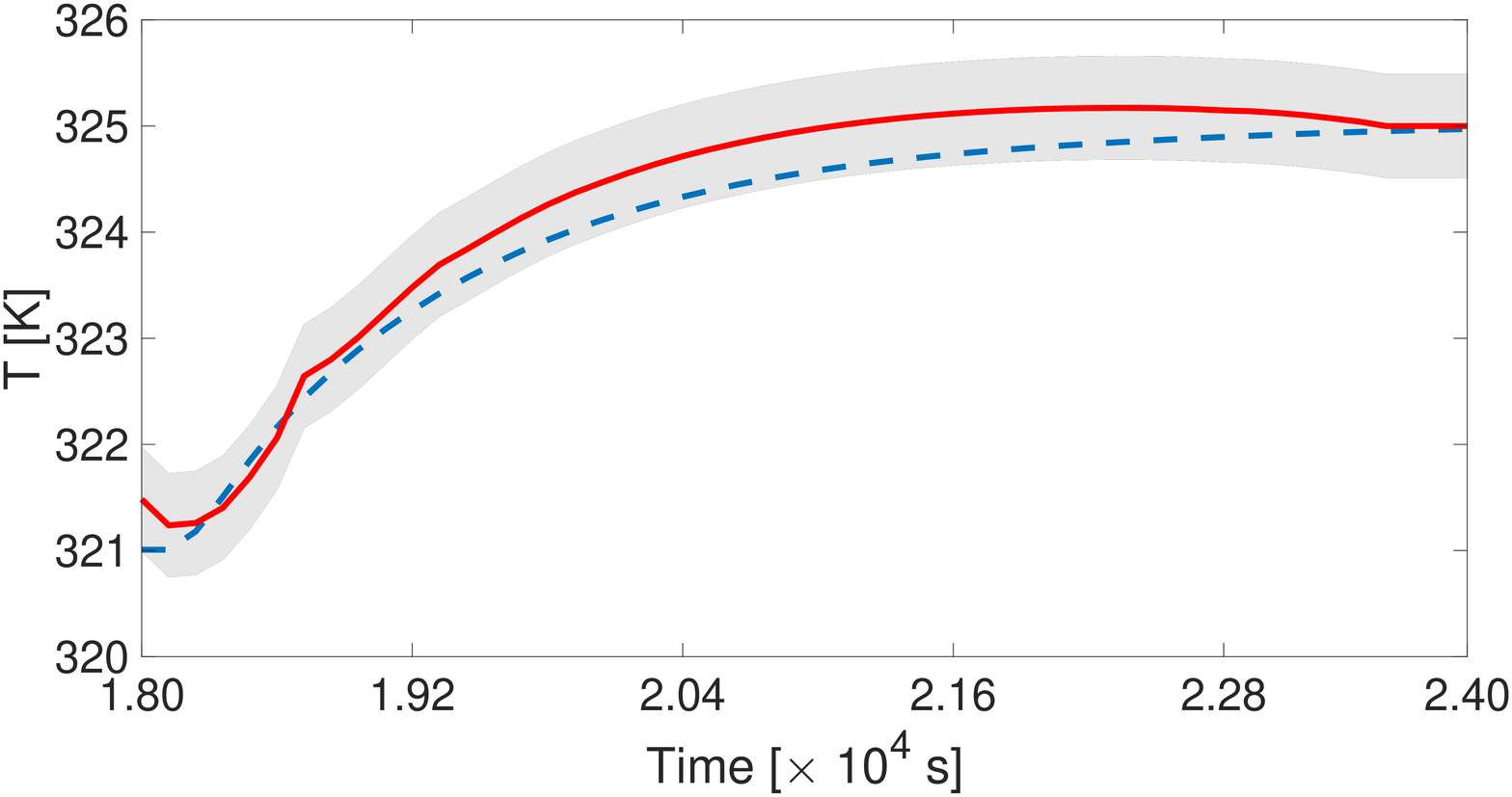}
	\caption{Closed-loop output trajectory of robust approach (blue dashed line) compared to  the open-loop nominal output trajectory (red continuous line) predicted at time $t_0$. The shaded area is the tube around the nominal trajectory.
	}
	\label{fig:y_tube_predict}
\end{figure}

It is worth noticing, however, that the uncertainty bound $\check{w}$ is rather conservative, since the maximum error only occurs when the setpoint change, whereas it is significantly smaller at steady state.
Time-varying tubes\cite{gonzalez2011online} could therefore represent a more efficient trade-off between conservativeness and robustness. 
Future research efforts will therefore be devoted to these approaches.

\section{Conclusions} \label{sec:conclusions}
In this paper, a nonlinear robust Model Predictive Control (MPC) scheme is proposed for system learned by Neural NARX models.
The model has been augmented with two additional elements, the integral action on the output tracking error and the derivative action on the MPC control variable. \REV{Moreover, a tube-based MPC is implemented, which keeps the real state within the prescribed tube around the nominal state and allows to guarantee robust constraint satisfaction when asymptotically constant modeling errors are present.}
The proposed control scheme features robust closed-loop stability and offset-free tracking capabilities.
Finally, the proposed control scheme has been tested on a water heating benchmark system, demonstrating its potentialities.

\section*{Acknowledgements}
\vspace{1mm}
\begin{minipage}[l]{0.2\columnwidth}
	\includegraphics[width=\textwidth]{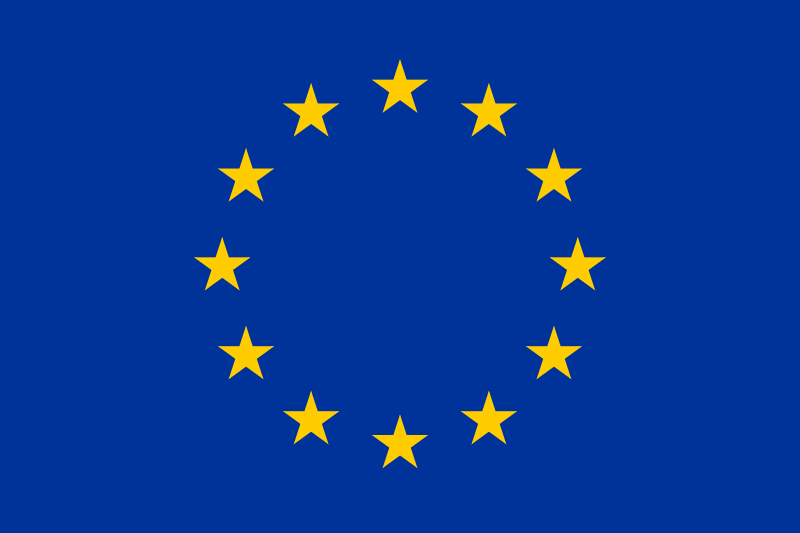}
	\label{fig:euflag}
\end{minipage}
\hspace{3mm}
\begin{minipage}[right]{0.7\columnwidth}
	This project has received funding from the European Union’s Horizon 2020 research and innovation programme under the Marie Skłodowska-Curie grant agreement No. 953348
\end{minipage}
\vspace{2mm}

\bibliography{wileyNJD-AMA.bib}%

\begin{thebibliography}{10}
\providecommand \doibase [0]{http://dx.doi.org/}%

\bibitem{zhong2013learningbased}
Hou ZS, Wang Z. From model-based control to data-driven control: Survey, classification and perspective. {\it {I}nformation {S}ciences} 2013\string; 235\string: 3--35.

\bibitem{schoukens2019nonlinear}
Schoukens J, Ljung L. Nonlinear system identification: A user-oriented road map. {\it IEEE Control Systems Magazine} 2019\string; 39(6)\string: 28--99.

\bibitem{hewing2019cautious}
Hewing L, Kabzan J, Zeilinger MN. Cautious model predictive control using Gaussian process regression. {\it IEEE Transactions on Control Systems Technology} 2019\string; 28(6)\string: 2736--2743.

\bibitem{terzi2019learning}
Terzi E, Fagiano L, Farina M, Scattolini R. Learning-based predictive control for linear systems: A unitary approach. {\it Automatica} 2019\string; 108\string: 108473.

\bibitem{GREEN2015109}
Green P, Cross E, Worden K. Bayesian system identification of dynamical systems using highly informative training data. {\it Mechanical Systems and Signal Processing} 2015\string; 56--57\string: 109--122.
\newblock \url{https://doi.org/10.1098/rspa.1937.0036}.

\bibitem{dynamicrnn1995delgado}
Delgado A, Kambhampati C, Warwick K. Dynamic recurrent neural network for system identification and control. {\it IEE Proceedings-Control Theory and Applications} 1995\string; 142(4)\string: 307--314.

\bibitem{nncontrol1995}
Miller WT, Sutton RS, Werbos PJ. {\it Neural networks for control}.
\newblock MIT press .
\newblock 1995.

\bibitem{bonassi2020stability}
Bonassi F, Farina M, Scattolini R. On the stability properties of gated recurrent units neural networks. {\it Systems \& Control Letters} 2021\string; 157\string: 105049.

\bibitem{terzi2021learning}
Terzi E, Bonassi F, Farina M, Scattolini R. Learning model predictive control with long short-term memory networks. {\it International Journal of Robust and Nonlinear Control} 2021\string; 31(18)\string: 8877--8896.

\bibitem{armenio2019model}
Armenio LB, Terzi E, Farina M, Scattolini R. Model predictive control design for dynamical systems learned by echo state networks. {\it IEEE Control Systems Letters} 2019\string; 3(4)\string: 1044--1049.

\bibitem{bonassi2021nnarx}
Bonassi F, Farina M, Scattolini R. Stability of discrete-time feed-forward neural networks in NARX configuration. {\it IFAC-PapersOnLine} 2021\string; 54(7)\string: 547--552.

\bibitem{bonassi2022survey}
Bonassi F, Farina M, Xie J, Scattolini R. On Recurrent Neural Networks for learning-based control: recent results and ideas for future developments. {\it Journal of Process Control} 2022\string; 114\string: 92--104.

\bibitem{magni2001output}
Magni L, De~Nicolao G, Scattolini R. Output feedback and tracking of nonlinear systems with model predictive control. {\it Automatica} 2001\string; 37(10)\string: 1601--1607.

\bibitem{bonassi2021nonlinear}
Bonassi F, Fabio C, Silva dO, Scattolini R. Nonlinear MPC for Offset-Free Tracking of systems learned by GRU Neural Networks. {\it IFAC-PapersOnLine} 2021\string; 54(14)\string: 54--59.

\bibitem{morari2012nonlinear}
Morari M, Maeder U. Nonlinear offset-free model predictive control. {\it Automatica} 2012\string; 48(9)\string: 2059--2067.

\bibitem{xie2023robust}
Xie J, Bonassi F, Farina M, Scattolini R. Robust offset-free nonlinear model predictive control for systems learned by neural nonlinear autoregressive exogenous models. {\it International Journal of Robust and Nonlinear Control}.
\newblock \href {\doibase https://doi.org/10.1002/rnc.6883} {doi: https://doi.org/10.1002/rnc.6883}

\bibitem{minmax1998scokaert}
Scokaert PO, Mayne DQ. Min-max feedback model predictive control for constrained linear systems. {\it IEEE Transactions on Automatic control} 1998\string; 43(8)\string: 1136--1142.

\bibitem{MAYNE2005robust}
Mayne DQ, Seron MM, Rakovi{\'c} S. Robust model predictive control of constrained linear systems with bounded disturbances. {\it Automatica} 2005\string; 41(2)\string: 219--224.

\bibitem{LANGSON2004tube}
Langson W, Chryssochoos I, Rakovi{\'c} S, Mayne DQ. Robust model predictive control using tubes. {\it Automatica} 2004\string; 40(1)\string: 125--133.

\bibitem{nn1996levin}
Levin A, Narendra K. Control of nonlinear dynamical systems using neural networks: controllability and stabilization. {\it IEEE Transactions on Neural Networks} 1993\string; 4(2)\string: 192-206.

\bibitem{nn1993levin}
Levin A, Narendra K. Control of nonlinear dynamical systems using neural networks. II. Observability, identification, and control. {\it IEEE Transactions on Neural Networks} 1996\string; 7(1)\string: 30-42.

\bibitem{nnapplication2008himmelblau}
Himmelblau D. Accounts of Experiences in the Application of Artificial Neural Networks in Chemical Engineering. {\it Industrial \& Engineering Chemistry Research - IND ENG CHEM RES} 2008\string; 47.

\bibitem{MOHDALI2015}
Jarinah M, Hussain M, Tade M, Zhang J. Artificial Intelligence techniques applied as estimator in chemical process systems – A literature survey. {\it Expert Systems with Applications} 2015\string; 42(14)\string: 5915-5931.

\bibitem{offsetnarx2022fabio}
Bonassi F, Xie J, Farina M, Scattolini R. An Offset-Free Nonlinear MPC scheme for systems learned by Neural NARX models. {\it arXiv preprint arXiv:2203.16290} 2022.

\bibitem{de1997stabilizing}
De~Nicolao G, Magni L, Scattolini R. Stabilizing predictive control of nonlinear ARX models. {\it Automatica} 1997\string; 33(9)\string: 1691--1697.

\bibitem{francis1976internal}
Francis BA, Wonham WM. The internal model principle of control theory. {\it Automatica} 1976\string; 12(5)\string: 457--465.

\bibitem{scattolini1985parameter}
Scattolini R, Schiavoni N. A parameter optimization approach to the design of structurally constrained regulators for discrete-time systems. {\it International Journal of Control} 1985\string; 42(1)\string: 177--192.

\bibitem{rawlings2017model}
Rawlings JB, Mayne DQ, Diehl M. {\it Model predictive control: theory, computation, and design}. 2.
\newblock Nob Hill Publishing Madison, WI .
\newblock 2017.

\bibitem{kolmanovsky1996invariantset}
Kolmanovsky I, Gilbert EG. Theory and computation of disturbance invariant sets for discrete-time linear systems. {\it Mathematical problems in engineering} 1998\string; 4(4)\string: 317--367.

\bibitem{jaeger2002tbptt}
Jaeger H. Tutorial on training recurrent neural networks, covering BPPT, RTRL, EKF and the" echo state network" approach.  2002.

\bibitem{Andersson2019}
Andersson J, Gillis J, Horn G, Rawlings BJ, Diehl M. {CasADi} -- {A} software framework for nonlinear optimization and optimal control. {\it Mathematical Programming Computation} 2019\string; 11(1)\string: 1--36.

\bibitem{gonzalez2011online}
Gonzalez R, Fiacchini M, Alamo T, Guzm{\'a}n JL, Rodr{\'\i}guez F. Online robust tube-based MPC for time-varying systems: A practical approach. {\it International Journal of Control} 2011\string; 84(6)\string: 1157--1170.

\bibitem{khalil2002nonlinear}
Khalil HK. {\it Nonlinear systems; 3rd ed.}
\newblock Prentice-Hall .
\newblock 2002.

\bibitem{bof2018lyapunov}
Bof N, Carli R, Schenato L. Lyapunov theory for discrete time systems. {\it arXiv preprint arXiv:1809.05289} 2018.

\end{thebibliography}

\appendix
\section{Stability properties} \label{appendix:stability}
\REV{
In previous works\cite{terzi2019learning, bonassi2020stability, bonassi2021nnarx, bonassi2022survey}, the stability properties of the main RNNs have been investigated, with the aim of learning models safe and robust against input perturbations, that could facilitate the design of theoretically sound control laws. 
In this context, a very popular stability notion for nonlinear systems is that of the Incremental Input-to-State Stability ($\delta$ISS), whose definition is reported below.
\begin{definition}[$\delta$ISS] \label{def:deltaiss}
	 System \eqref{eq:nnarx:compact} is $\delta$ISS if there exist functions $\beta \in \mathcal{KL}$ and $\gamma \in \mathcal{K}_\infty$ such that, for any pair of initial states $\bar{x}_{a,0}$ and $\bar{x}_{b,0}$, and any pair of input sequences $\boldsymbol{u}_a \in \mathcal{U}$ and $\boldsymbol{u}_b \in \mathcal{U}$, at any $k \geq 0$ it holds that
	\begin{equation} \label{eq:delta_isseltaiss:definition}
		\| \bar{x}_{a,k} - \bar{x}_{b,k} \|_2 \leq \beta(\| \bar{x}_{a,0} - \bar{x}_{b,0} \|_2, k) + \gamma(\| \boldsymbol{u}_a - \boldsymbol{u}_b \|_{2,\infty}),
	\end{equation}
	 where $\bar{x}_{*,k}$ denotes the state trajectory of the system \eqref{eq:nnarx:compact}, initialized in $\bar{x}_{*,0}$ and fed by the sequence $\boldsymbol{u}_*$. 
\end{definition}}
\REV{
\begin{definition}[Exponential $\delta$ISS] \label{def:exponential_deltaiss}
	If system \eqref{eq:nnarx:compact} is $\delta$ISS in the sense specified by Definition \ref{def:deltaiss}, and the function $\beta$ takes an exponential form, i.e. there exist constants $\rho > 0$ and $\lambda \in (0, 1)$ such that $\beta(\| \bar{x}_{a,0} - \bar{x}_{b,0} \|_2, k) \leq \rho \| \bar{x}_{a,0} - \bar{x}_{b,0} \|_2 \, \lambda^k$, the system is said to be exponentially $\delta$ISS.
\end{definition}
In Reference \citenum{bonassi2021nnarx}, the $\delta$ISS property of NNARX models has been investigated, showing that there exists a sufficient condition on the set of weights, $\Phi$, guaranteeing its exponential $\delta$ISS.
In particular, such sufficient condition is the satisfaction of the following nonlinear inequality on the network's weights
\begin{equation}
	\prod_{l=0}^M \| U_l \|_2 - \frac{1}{\big( \prod_{l=1}^M L_{\sigma_l} \big) \sqrt{M}} <0,
\end{equation}
where $L_{\sigma_l}$ denotes the known Lipschitz constant of the $l$-th activation function.
Such sufficient condition can be enforced during the training procedure by means of a suitable regularization term, so that a provenly $\delta$ISS NNARX model is trained.}

\section{Proof of Proposition 1} \label{appendix:proof_stability}
\begin{proof}
Let $\delta x_k$ and $\delta u_k$ be the displacement from the equilibrium point $(\bar{x}, \bar{u})$, i.e. $x_k = \bar{x} + \delta x_k$ and $u_k = \bar{u} + \delta u_k$.
The nonlinear system \eqref{eq:nnarx:compact} can be rewritten as
\begin{equation} \label{eq:proof:delta_system_full}
	\delta x_{k+1} + \bar{x} = f(\bar{x} + \delta x_k, \bar{u} + \delta u_k).
\end{equation}
Since the goal is to analyze the asymptotic stability of the linearized system, for simplicity it is assumed that $\delta u_k = 0$.
It is worth noticing, however, that this proof could be easily extended to consider $\delta u_k \neq 0$, at the price of more involved computations.
Under this simplification, \eqref{eq:proof:delta_system} reads
\begin{equation} \label{eq:proof:delta_system}
	\delta x_{k+1} + \bar{x} = f(\bar{x} + \delta x_k, \bar{u}).
\end{equation}

System \eqref{eq:proof:delta_system} can be recast as its linearization plus the linerization error $\varepsilon$
\begin{equation}
	\delta x_{k+1} = A_\delta \delta x_k + \varepsilon(\delta x_k),
\end{equation}
where
\begin{equation}
	A_\delta = \left. \frac{\partial f}{\partial x_k} \right\lvert_{\bar{x}, \bar{u}}.
\end{equation}
The goal is to show that the linear system
\begin{equation}\label{eq:proof:linear}
    \delta x_{k+1} = A_\delta \delta x_k
\end{equation}
is asymptotically stable.
Along the lines of Reference \citenum{khalil2002nonlinear}, the linearization error is first bounded as follows.

Consider the $i$-th state component, with $i \in \{ 1, ..., n \}$. In light of the Mean Value Theorem there exists $\tilde{x}$ between $\bar{x}$ and $\bar{x} + \delta x_k$ such that
\begin{equation}
\begin{aligned}
&f_i(\bar{x} + \delta x_k, \bar{u}) - f_i(\bar{x}, \bar{u}) = \left. \frac{\partial f_i(x, \bar{u})}{\partial x} \right\lvert_{\tilde{x}, \bar{u}}  \delta x_k \\
& \,\, = \left.\frac{\partial f_i(x, \bar{u})}{\partial x} \right\lvert_{\bar{x}, \bar{u}} \!\! \delta x_k + \bigg[ \left. \frac{\partial f_i(x, \bar{u})}{\partial x} \right\lvert_{\tilde{x}, \bar{u}} \! - \! \left.\frac{\partial f_i(x, \bar{u})}{\partial x} \right\lvert_{\bar{x}, \bar{u}} \bigg] \delta x_k \\
& \,\, = A_{\delta i} \delta x_k + \tilde{\varepsilon}_i(\delta x_k) \delta x_k
\end{aligned}
\end{equation}
In the light of the assumption on the Lipschitz continuity of the gradient of $f(x,\bar{u})$, it holds that
\begin{equation}
    \begin{aligned}
        \| \tilde{\varepsilon}_i(\delta x_k) \|^2_2 &\leq \left\| \left. \frac{\partial f_i(x, \bar{u})}{\partial x} \right\lvert_{\tilde{x}, \bar{u}} \! - \! \left.\frac{\partial f_i(x, \bar{u})}{\partial x} \right\lvert_{\bar{x}, \bar{u}} \right\|^2_2 \\
        & \leq L_1^2 \| \tilde{x} - \bar{x} \|_2^2 \leq L_1^2 \| \delta x_k \|_2^2.
    \end{aligned}
\end{equation}
Hence, being $\varepsilon(\delta x_k) = \tilde{\varepsilon}(\delta x_k) \delta x_k$,
 the linearization error can be bounded as
\begin{equation} \label{eq:proof:linearization_err_bound}
    \begin{aligned}
        \| \varepsilon(\delta x_k) \|_2 &\leq \| \delta x_k \|_2 \, \sqrt{\sum_{i=1}^n \big\| \tilde{\varepsilon}_i(\delta x_k ) \big\|_2^2 \, \|\delta x_k \|^2_2}  \\
        &\leq L_\varepsilon \| \delta x_k \|_2^2,
    \end{aligned}
\end{equation}
where $L_\varepsilon = L_1 \sqrt{n}$.

At this stage, let us recall that the $\delta$ISS property  implies the \REV{exponential} Global Asymptotic Stability (GAS) of any equilibrium.
Indeed, recalling that $\delta u_k = 0$, from \eqref{eq:delta_isseltaiss:definition} it follows that
\begin{equation} \label{eq:proof:exponential_gas}
    \| \delta x_k \|_2 \leq \rho  \| \delta x_0 \|_2 \lambda^k.
\end{equation}
This allows to invoke Theorem 5.8 in Reference \citenum{bof2018lyapunov}, which, under the assumption of exponential GAS, guarantees the existence of a quadratic Lyapunov function $V(\delta x)$ for the nonlinear system \eqref{eq:proof:delta_system}.
That is, there exist positive constants $c_1$, $c_2$, $c_3$, $c_4$, such that
\begin{subequations}
    \begin{gather}
    c_1 \| \delta x_k \|_2^2 \leq V(\delta x_k) \leq c_2 \| \delta x_k \|_2^2, \label{eq:proof:lyapunov:bounds}\\
    V(A_\delta \delta x_{k} + \varepsilon(\delta x_k)) - V(\delta x_k) \leq -c_3 \| \delta x_k \|_2^2 \label{eq:proof:lyapunov:decreasing} \\
    \left\| \frac{\partial V(A_\delta \delta x_k + \varepsilon(\delta x_k))}{\partial \varepsilon} \right\|_2 \leq c_4 \| \delta x_k \|_2. \label{eq:proof:lyapunov:lipschitz}
    \end{gather}
\end{subequations}
The goal is to show that $V(\delta x_k)$ is also a Lyapunov function for the linear system \eqref{eq:proof:linear}.
To this end, let us add and subtract $V(A_\delta \delta x_k)$ from the left-hand side of \eqref{eq:proof:lyapunov:decreasing}, leading to
\begin{equation}\label{eq:proof:lyapunov:int1}
\begin{aligned}
    & V(A_\delta \delta x_{k}) - V(\delta x_k)  + \big[ V(A_\delta \delta x_{k} + \varepsilon(\delta x_k)) - V(A_\delta \delta x_{k}) \big] \\
    & \quad \leq -c_3 \| \delta x_k \|_2^2
\end{aligned}
\end{equation}
In light of \eqref{eq:proof:lyapunov:lipschitz} and \eqref{eq:proof:linearization_err_bound}, $V(A_\delta \delta x_{k} + \varepsilon(\delta x_k)) - V(A_\delta \delta x_{k})$ can be bounded as
\begin{equation} \label{eq:proof:lyapunov:int2}
\begin{aligned}
    & \big\| V(A_\delta \delta x_{k} + \varepsilon(\delta x_k)) - V(A_\delta \delta x_{k}) \big\|_2  \\
    & \qquad \leq \left\| \frac{\partial V(A_\delta \delta x_k + \varepsilon(\delta x_k))}{\partial \varepsilon} \right\|_2 \, \| \delta x_k \|_2 \\
    & \qquad \leq c_4 L_\varepsilon \| \delta x_k \|_2^3.
\end{aligned}
\end{equation}

Owing to the bound \eqref{eq:proof:lyapunov:int2} and to the exponential GAS \eqref{eq:proof:exponential_gas}, recalling that $\lambda \in (0, 1)$, from \eqref{eq:proof:lyapunov:int1} it holds that
\begin{equation}
\begin{aligned}
    V(A_\delta \delta x_{k}) - V(\delta x_k) &\leq - c_3 \| \delta x_k \|_2^2 - c_4 L_\varepsilon \| \delta x_k \|^3_2  \\
    &\leq - c_3 \| \delta x_k \|_2^2 - \rho c_4 L_\varepsilon \| \delta x_0 \|_2 \| \delta x_k \|_2^2  \\
    &\leq - \big(c_3 - \rho c_4 L_\varepsilon \| \delta x_0 \|_2 \big) \| \delta x_k \|_2^2.
\end{aligned}
\end{equation}
Hence, there exist constants $c_5>0$ and $r_0 > 0$ such that, $\forall \delta x_0 \in \{ \delta x_0 : \| \delta x_0 \|_2 \leq r_0 \}$,
\begin{equation*}
    V(A_\delta \delta x_{k}) - V(\delta x_k) \leq - c_5 \| \delta x_k \|_2^2.
\end{equation*}
The asymptotic stability of the linear system \eqref{eq:proof:linear} is proven by using $V(\delta x_k)$ as Lyapunov function, which implies that $A_\delta$ is Schur stable. 

\end{proof}

\end{document}